\documentclass[12pt]{article}
\usepackage[final]{epsfig}
\usepackage[english]{babel}
\usepackage{algpseudocode,amsmath,makecell,multicol,authblk,verbatim,amsthm,verbatim,color,graphicx,amssymb,amsmath,latexsym,wrapfig,subfig,multicol,url,siunitx,hyperref,doi,enumitem,xargs,babel,float,soul}
\setcellgapes{8pt}
\usepackage[T1]{fontenc}
\usepackage{amssymb, color}
\usepackage[utf8]{inputenc}
\usepackage[pdftex,dvipsnames]{xcolor}  
\usepackage{booktabs}
\usepackage{algorithm2e}
\usepackage{multirow}

\hypersetup{colorlinks=true,linkcolor=black}

\makeatletter
\def\blfootnote{\xdef\@thefnmark{}\@footnotetext}
\makeatother
\newcommand{\R}{\mathbb{R}}

\newcommand{\bfa}{{\bf a}}
\newcommand{\bfb}{{\bf b}}

\newcommand{\bfe}{{\bf e}}

\newcommand{\bfn}{{\bf n}}

\newcommand{\bfp}{{\bf p}}

\newcommand{\bft}{{\bf t}}

\newcommand{\bfu}{{\bf u}}
\newcommand{\bfv}{{\bf v}}
\newcommand{\bfw}{{\bf w}}

\newcommand{\bfy}{{\bf y}}

\newcommand{\bfz}{{\bf z}}
\newcommand{\bfA}{{\bf A}}

\newcommand{\bfI}{{\bf I}}

\newcommand{\bfM}{{\bf M}}

\newcommand{\bfP}{{\bf P}}

\newcommand{\bfR}{{\bf R}}

\newcommand{\bfT}{{\bf T}}

\newcommand{\bfW}{{\bf W}}

\newcommand{\beq}{\begin{equation}}
\newcommand{\eeq}{\end{equation}}
\newcommand{\beqs}{\begin{eqnarray}}
\newcommand{\eeqs}{\end{eqnarray}}

\newcommand{\calG}{{\cal G}}

\newcommand{\calS}{{\cal S}}

\newcommand{\Rmnum}[1]{\text{\uppercase\expandafter{\romannumeral #1}}}
\newtheorem{theorem}{Theorem}

\newtheorem{remark}{Remark}

\newcommand{\bfig}{\begin{figure}[!h]}	
\newcommand{\efig}{\end{figure}}		

\begin{document}

\begin{center}
{\LARGE  \bf Local growth laws determine global shape of molluscan shells}  \\

\normalsize

\vspace{6mm}
Huan Liu and Kaushik Bhattacharya\footnote{Corresponding author: bhatta@caltech.edu}
\vspace{3mm}

Division of Engineering and Applied Sciences, \\
California Institute of Technology, Pasadena, CA 91125, USA\\

\end{center}

\begin{abstract}

Molluscan shells come in various shapes and sizes.  Despite this diversity, each species produces a shell with a characteristic shape that is independent of environmental conditions. We seek to understand this robust complexity.  We are guided by two principles in the spirit of D'Arcy Thompson.  First, the growth is governed by the repeated and continuous application of a fixed growth law, even as the shell evolves in overall shape,  without any complex biological machinery to monitor and control the growth.  Second, the growth law depends solely on local geometry at the shell's growing edge.  The first principle naturally leads to the mathematical statement that the shape of the shell is generated by the action of a Lie group on a protoconch. The second naturally leads to a particular representation of the Lie group.  We use this representation to show that the shapes of nearly all known molluscan shells can be described by essentially three parameters: a scalar (scaling), a vector (orientation), and a curve (edge of the protoconch).  We relate these parameters to the phylogenetic tree.  In addition to the morphogenetic insight, our results potentially point to a new approach to engineering complex structures.
\end{abstract}

\paragraph{Keywords:}  Molluscan shell, Lie group, curvature-driven growth, phylogeny

\paragraph{Significance:}  We show that molluscan shells may be described as conformal Lie groups whose parameters have a close relationship with phylogeny. Further, this shape and group are equivalent to a curvature-driven growth law at the growing edge. Finally, the shape and growth law are robust { with respect to variations in growth rates arising from environmental influences. These results suggest that, once biologically programmed through the selection of Lie-group parameters, shell growth can proceed according to local growth laws acting at the growth edge, without requiring active shape control by the mantle.}



\section{Introduction}

The ability to design and program biological growth holds transformative potential for both synthetic biology \cite{scott2010interdependence,klumpp2009growth} and the engineering of intelligent systems \cite{kriegman2020scalable}. However, despite rapid advances in bioengineering and intelligent systems, the underlying principles that govern biological growth remain incompletely understood. Growth can be an emergent process driven by complex, multiscale interactions among genetic regulation, physical forces, and environmental cues.  Understanding these is not only crucial for advancing fundamental biology but also an essential step toward the development of programmable living matter and bioinspired intelligent systems capable of autonomous adaptation and self-organization.

Traditional approaches to morphogenesis \cite{johnson2019growth,boettiger2009neural,serra2020dynamic}, the biological process that causes an organism to develop its shape, have focused on detailed, bottom-up explanations involving specific genetic pathways and biochemical reactions. While these models are critical for understanding cellular-level mechanisms, they often lack a broader, general framework that can account for the geometric regularities observed across biological forms. This limitation presents fundamental challenges: can we move beyond cataloging individual biological processes to uncover a concise, powerful language that governs growth across seemingly disparate forms? Can simple growth rules generate a variety of complex shapes? These are the core problems that motivate this research.  

Among the diverse manifestations of biological growth, molluscan shell formation offers a uniquely informative system for studying the emergence of complex form. Shells grow through surface accretion, guided by the spatiotemporal regulation of secretory activity at the mantle edge \cite{biomineral,marin2007molluscan}. Simply by repeating this accretion process, shells exhibit intricate geometry, hierarchical microstructures, and robust growth dynamics \cite{marin2020mollusc,chirat2021physical}. These properties make shell growth an interesting model for uncovering general principles of morphogenesis and for exploring how growth principles can give rise to diverse forms at larger scales. 

Conventionally, shell growth has been thought to proceed under the guidance of an elaborate template composed of an organic matrix secreted by the mollusk \cite{du2024non}. Recent perspectives downplay the role of the organic matrix and instead view shell formation as a thermodynamically driven physico-chemical process.
This line of thought traces back to Thompson’s seminal work {\it On Growth and Form} \cite{thomson1917growth}, where he observed that a wide range of biological forms follow certain mathematical descriptions. This work itself was influenced by earlier studies by Rainey \cite{rainey1857formation} and Harting \cite{harting1872recherches}, who demonstrated that complex mineral structures can form abiotically in simple colloidal systems. 

More recently, researchers have shown that shell microstructures can naturally arise through classical crystal growth mechanisms \cite{schoeppler2019crystal,best2024classical,granasy2021phase}, implying that the formation of shells may inherently integrate growth principles established in materials science.  Given the diversity of shell morphologies, these material-based local growth laws could be mediated or fine-tuned by biological activities. It's shown that the neurosecretory system in mollusks is highly active at the mantle margin, where it affects shell secretion \cite{boettiger2009neural,li2024neuropeptides}. This indicates that shell growth could be neurally adjusted through feedback from the shell margin. Moreover, the shell growth and mantle growth can be uncoupled \cite{lewis1997growth}, allowing shell morphology to be preserved under varying mantle states. This observation implies that shell formation is not simply a passive outcome of mantle growth, but a self-organising process in mollusks driven by robust growth laws relating to the shell's own state.

\section{Growth law, shell shape, and Lie group}
Motivated by the observations described above, we seek emergent laws to describe the growth of general molluscan shells of intricate diversity without appealing to specific biochemical processes. This top-down approach requires first establishing a formal language to describe their growth kinematics.  We begin with the observation that the shape of the growing edge remains invariant during growth, and traces a trajectory that combines rotation and dilatation  \cite{thomson1917growth} as shown in Fig.~\ref{fig:lie_elements}A.  This suggests that the growth process can be formulated in a formal mathematical framework and that the shells grow by accretion to the growth front.  

\begin{figure}[t]
\centering
\includegraphics[width=0.7\textwidth]{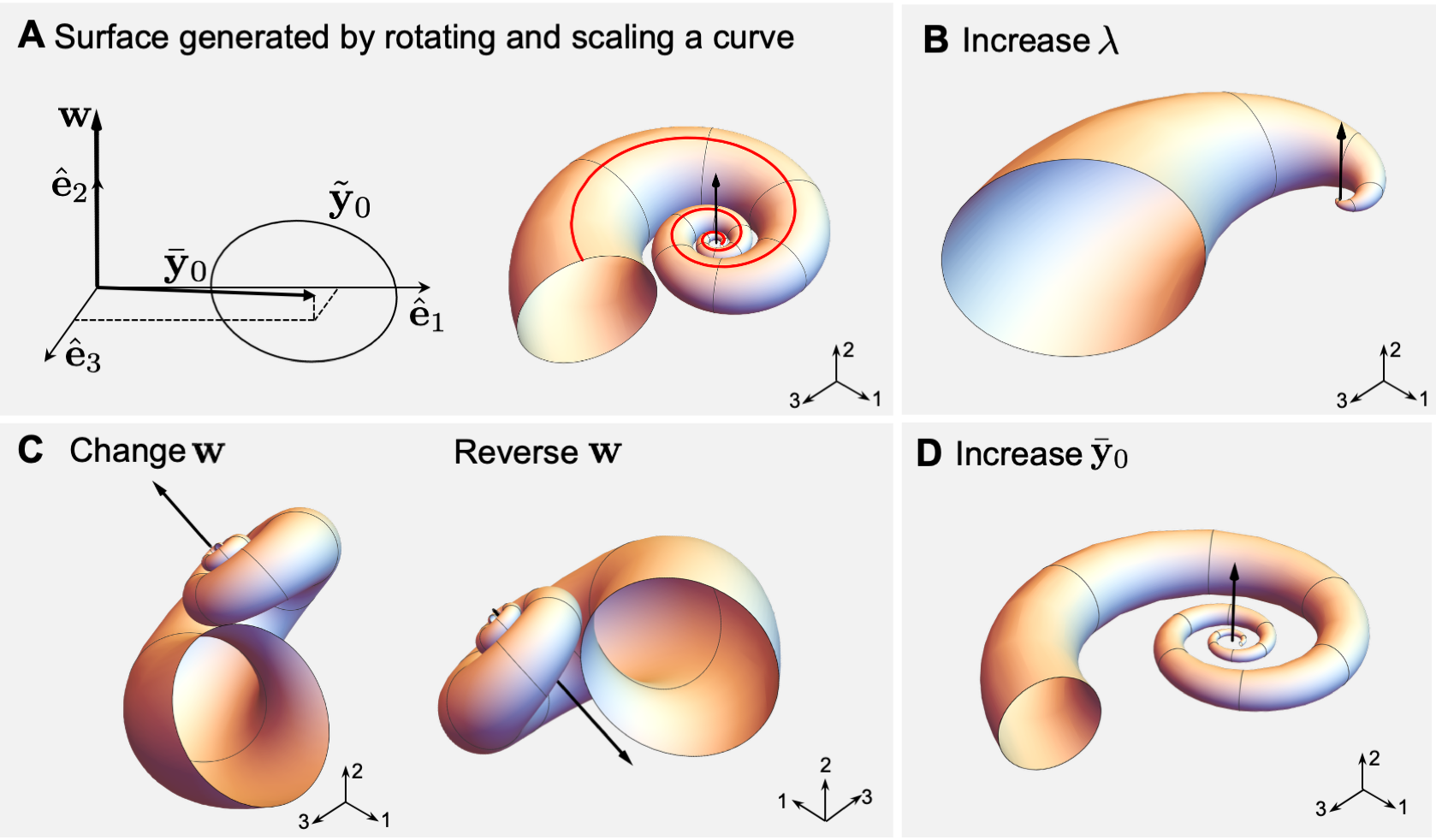}
\caption{ Shells generated by scaling and rotation of the growth front according to (\ref{seashell}). (A) Surface generated by rotating a circular curve $\widetilde{\bfy}_0=\sin x\,\hat{\bfe}_1+\cos x\,\hat{\bfe}_2$, $x\in[0,2\pi]$, with centroid $\overline{\bfy}_0=2.1\,\hat{\bfe}_1-0.5\,\hat{\bfe}_3$ about the axis $\bfw=\hat{\bfe}_2$, while scaling by $\lambda=1.2$ for $t\in[0,17]$. (B) Increasing $\lambda$ to $2$ results in more rapid expansion. (C) Changing the rotation axis to $\bfw=\pm(\hat{\bfe}_1+\hat{\bfe}_2)/\sqrt{2}$ produces a spiral shell; reversing $\bfw$ reverses its chirality. (D) Doubling $\overline{\bfy}_0$ produces a looser spiral. All unspecified parameters in (B-D) are the same as in (A).}
\label{fig:lie_elements}
\vspace{-6mm}
\end{figure}
 
Indeed, various mathematical descriptions have been proposed for the shape and growth of shells, including parametrization of a shell surface \cite{raup1966geometric}, kinematic equations for a growing surface \cite{skalak1997kinematics}, and equations for a growth front that translates, rotates, scales, and twists  \cite{moulton2014surface}.   Our starting point is similar to \cite{moulton2014surface}, following the observation of \cite{thomson1917growth}.
%

\paragraph{Lie group representation}   Given that shell morphology arises from successive deposition at the growing edge, its evolution constitutes a continuous sequence of geometric transformations, each acting on the configuration of its predecessor.  This iterative structure aligns naturally with the composition and closure properties of a group, and specifically Lie groups \cite{helgason1979differential}. Lie groups are manifolds that are also groups.  They are essentially a set of rules that describe how to make continuous transformations in a consistent way. Associated with each such group is a Lie algebra\footnote{The Lie algebra is the tangent space of the Lie group at the identity element, equipped with the Lie bracket as its multiplication rule; see \cite{helgason1979differential}.}, which captures the linearized, infinitesimal motions that generate these transformations. In the case of molluscan shells, growth proceeds through a continuous sequence of dilations and rotations. Within this framework, the corresponding Lie algebra element, say $\bfA$, can be decomposed as two parts:
\beq
\bfA=\log\lambda\bfI+\bfW,\label{lie_algebra}
\eeq
where the scalar $\lambda$ characterizes the scaling\footnote{The scaling term is taken to be in logarithmic form to simplify the expressions in the calculations that follow.} and $\bfW=\bfw\times$
is a skew-symmetric tensor that characterizes the rotation about the axis $\bfw$. Applying the Lie algebra element (\ref{lie_algebra}) to the growth front $\bfy(t,x)$, parametrized by $x$ at time $t$, we obtain a partial differential equation governing its time evolution, 
\beq
\frac{\partial \bfy(t,x)}{\partial t}=\bfA\bfy(t,x),\quad x\in[L_1,L_2].\label{affine_y}
\eeq
By solving (\ref{affine_y}), the evolution of the growth front is given by 
\beq
\bfy(t,x)=\lambda^t\bfR(t)\bfy_0(x),\quad x\in[L_1,L_2],\label{seashell}
\eeq
where $\bfy(0,x)=\bfy_0(x)$ is the initial seed aperture and $\bfR(t)=\exp(t\bfW)$ is the rotation obtained via the exponential map of $\bfW$ (it is a rotation through angle $t$ and about a fixed rotational axis $\bfw$).  
Equation (\ref{seashell}) describes the shape of the shell: it depends on the elements $\lambda$ and $\bfW$ of the Lie algebra as well as the initial seed $\bfy_0$.  Referring back to Figure \ref{fig:lie_elements}A, the red line represents a line with constant $x$, while the sequence of black lines represents lines at different constant values of $t$.  

{ Figure \ref{fig:lie_elements} illustrates the Lie-group model (\ref{seashell}) and the roles of its parameters.  The shell in each panel is generated by rotating and scaling a circle.  The parameter $\lambda$ characterizes the expansion of the shell per unit growth increment (compare Fig.~\ref{fig:lie_elements}A and B), while $\bfw$, the rotation axis of $\bfR$, determines the direction of rotation (compare Fig.~\ref{fig:lie_elements}A and C). One can change the chirality of a spiral shell by reversing $\bfw$.  Note that the initial growth seed may be decomposed as $\bfy_0(x) = \widetilde \bfy_0 (x) + \overline \bfy_0$, where $\widetilde{\bfy}_0$ has its centroid at the origin and describes the shape of the generating curve, while $\overline{\bfy}_0$ is independent of $x$ and specifies its position relative to the rotation axis.  Thus,
\begin{equation}
\bfy(x,t) = \lambda^t\bfR(t)\widetilde \bfy_0(x) + \lambda^t\bfR(t) \overline \bfy_0.
\end{equation}
The first term describes the shape of the growth front, which remains unchanged up to scaling and rotation. The second term is independent of $x$ along the growth front and represents a time-dependent translation. The vector $\overline{\bfy}_0$ determines the distance of the generating curve from the rotation axis and, consequently, how tightly the shell coils (compare Fig.~\ref{fig:lie_elements}A and D).
}

We also note that  (\ref{seashell}) follows the orbit of a conformal Lie group (i.e., a group of continuous transformations that preserve the conformal geometry of the space) given by
\beq \label{group}
\calG=\{ \lambda^t\bfR(t), t \in {\mathbb R} \},
\eeq
with the group action $\calG(\bfy_0(x)) = \lambda^t\bfR(t)\bfy_0(x)$.

\paragraph{Initiation of growth}
The application of the Lie group to shell growth requires that we identify the initial growth seed $\bfy_0$.   It is known that molluscan eggs experience determinate spiral cleavage \cite{marin2007molluscan,huan2020dorsoventral} (except in Coleoidea), which establishes a helicospiral form of the protoconch during embryonic and larval development \cite{jackson2007dynamic}.   In homeostrophic molluscan shells, where the protoconch and teleoconch share a common axis, the edge of the protoconch defines the initial front $\widetilde \bfy_0$.  In heterostrophic shells, where their axes differ, there is a reconstruction of the mantle during the metamorphic stage, and this determines the initial front $ \widetilde \bfy_0$. { The translation $\overline \bfy_0$ is part of the growth law, and linked to the local growth law to be discussed later.}



\paragraph{Shell morphology, Lie group elements, and phylogenetic tree}
Figure \ref{fig:shells}(A)  shows the results of applying the ideas above to a wide variety of molluscan shell morphologies.  In each case, the shape that is obtained in Mathematica by applying (\ref{seashell}) with the elements listed in Table \ref{tab:param} is shown on the left, while the corresponding image of the actual shell is shown on the right.  The shapes of molluscan shells that we have studied, and these span a large variety, can be obtained in this manner.

\begin{figure}[t]
\centering
\includegraphics[width=\textwidth]{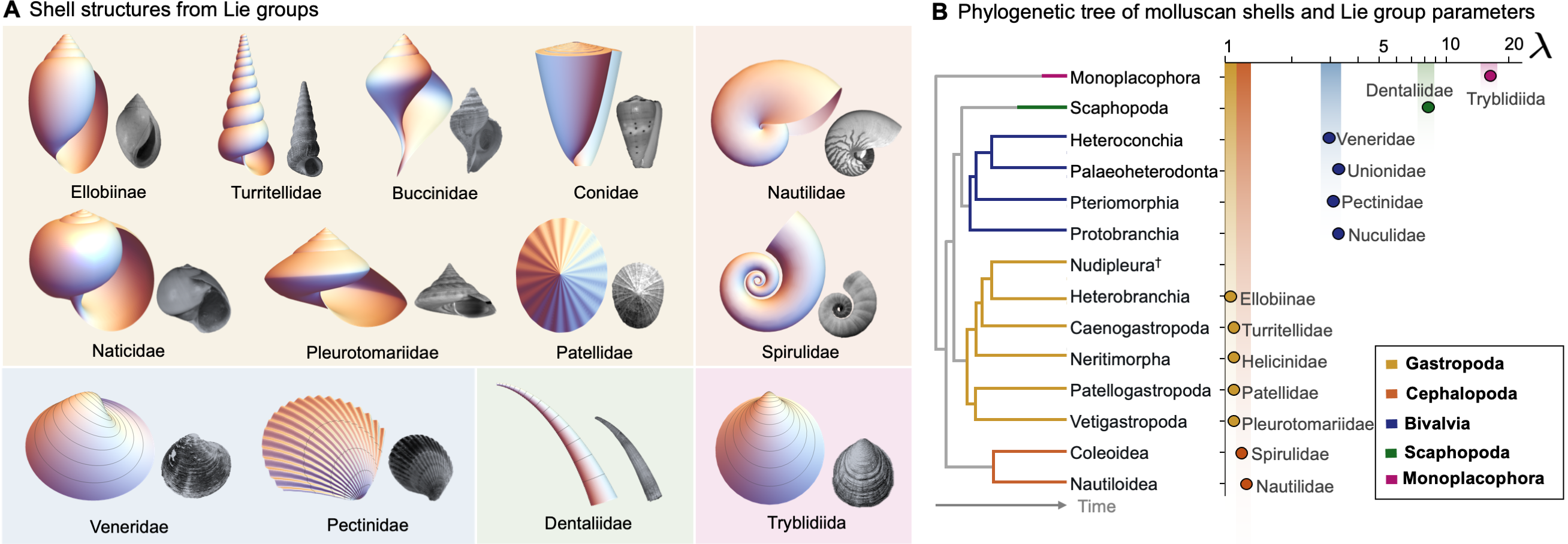}
\caption{(A). Representative molluscan shells obtained by Lie groups \eqref{group}. Shell families enclosed in boxes of the same color belong to the same class, and the grey subfigures display natural shell specimens. Different families of molluscan shells correspond to different Lie algebra parameters $\lambda$ and the seed aperture $\bfy_0$ (Table \ref{tab:param}). Licenses of gray subfigures: Buccinidae, Nautilidae, Spirulidae, and Tryblidiida are from Wikimedia Commons (public domain); Conidae is from an image by James St.~John via Wikimedia Commons, licensed under CC BY 2.0;  Patellidae is from Wikimedia Commons (James St.~John);  Pleurotomariidae is from an image by Shellnut via Wikimedia Commons, licensed under CC BY-SA 3.0; Dentaliidae is from an image from the Museum of New Zealand Te Papa Tongarewa, licensed under CC BY 4.0. The remaining gray subfigures were taken by the authors. (B) Phylogeny of mollusks based on \cite{chen2025genome} and the values of $\lambda$ used in this work to generate the corresponding classes;  Polyplacophora, Caudofoveata, and  Solenogastres are excluded as they either lack shells or possess them only rarely, as is  Nudipleura$^\dagger$ showed in the figure. 
}
\label{fig:shells}
\end{figure}

\setlength\tabcolsep{2pt}
\renewcommand{\arraystretch}{1.6}
\begin{table}
\caption{Geometric elements that describe the shape of molluscan shells shown in Figure \ref{fig:shells}.  \label{tab:param}}
\noindent\begin{tabular}{|c|c|c|l|c|c|c|c|c}
\hline
\textbf{Class} &
\textbf{Family} &
\bf{\makecell[c]{$\lambda$}}& 
\bf{\makecell[c]{$\bfy_0(x)$}}& t {\footnotesize  \rm (range)}
\\
\hline
\multirow{7}{*}{Gastropoda}&Turritellidae & 1.033  & {\footnotesize  $\makecell[l]{(0.719-0.965\cos(x - 0.2)+0.261\sin x) \hat\bfe_1 +\\ (6.927-0.26\cos(0.2-x)-0.965\sin x) \hat\bfe_2,\; x\in[0,2\pi]}$}&[0,70] \\
\cline{2-5}
&Ellobiinae & 1.06&  {\footnotesize $\makecell[l]{(0.992\sin x+0.238\cos x-0.556) \hat\bfe_1+\\ (-0.129\sin x+1.831\cos x+3.092) \hat\bfe_2+\\(0.004\sin x-0.11\cos x-0.182)\hat\bfe_3,\quad x\in[0,2\pi]}$}& [0,40]\\
\cline{2-5}
&Conidae & 1.04 & {\footnotesize  $\makecell[l]{(0.299\cos x +0.002x^2\sin x-0.024) \hat\bfe_1 +\\ (0.018\cos x-0.033x^2\sin x+0.059) \hat\bfe_2,\; x\in [1.6\pi,2.35\pi] }$}& [0,50] \\
\cline{2-5}
&Buccinidae &1.1 &{\footnotesize  $\makecell[l]{\tanh(x + 1.8) (0.096x^2-0.594x+0.92)\hat\bfe_1+\\  \tanh(x + 1.8) (0.029x^2-0.178x+0.276)\hat\bfe_2+\\ (-0.287x+0.307)\hat\bfe_1+(0.958x+2.806)\hat\bfe_2,x\in[-2.1,2]}$}& [0,30] \\
\cline{2-5}
&Naticidae &1.1 &  {\footnotesize $(\sin x -0.7)\hat\bfe_1 + (1.3 \cos x+1.65) \hat\bfe_2,\;x\in[0, 2\pi]$}&[0,30] \\
\cline{2-5}
&\makecell[c]{Pleurotoma-\\riidae} & 1.1& {\footnotesize $\makecell[l]{(0.992\hat\bfe_1-0.129\hat\bfe_2+0.004\hat\bfe_3)\cos x(|\sin x|^p+|\cos x|^p)^{\frac{-1}{p}}+ \\  (0.084\hat\bfe_1+0.643\hat\bfe_2-0.039\hat\bfe_3)\sin x(|\sin x|^{p}+|\cos x|^{p})^{\frac{-1}{p}}-\\0.824\hat\bfe_1+1.416\hat\bfe_2-0.081\hat\bfe_3,\;p=1.5+0.3\sin x, x\in[0,2\pi]}$}& [0,50]\\
\cline{2-5}
&Patellidae$^1$ &1.1 & {\footnotesize $\sin x \hat\bfe_1 + 1.25 \cos x \hat\bfe_2 + (0.03 \sin 20x-1.4) \hat\bfe_3,\; x\in[0,2\pi]$}&[0,40]  \\
\hline
\multirow{2}{*}{Cephalopoda}
& Spirulidae & 1.2&  {\footnotesize $(\sin x-2.1) \hat\bfe_1 + \cos x \hat\bfe_2+0.5\hat\bfe_3,\; x\in[0,2\pi]$}& [0,17]\\
\cline{2-5}
&Nautilidae & 1.24&{\footnotesize$\makecell[l]{(1.2-1.3 \cos x) \hat\bfe_1  +(0.08+0.3 \sin(0.958-2 x)) \hat\bfe_3\\ - \sin x \hat\bfe_2,\; x\in [0,2\pi] }$}& [0,10]\\
\hline
\multirow{3}{*}{Bivalvia}&Veneridae & 3.1&{\footnotesize$\makecell[l]{(0.302\sin x+0.762 \cos x-0.667) \hat\bfe_1+\\ (0.905\sin x-0.241\cos x+0.211) \hat\bfe_2+\\ (0.302\sin x-0.038\cos x+0.033)\hat\bfe_3,\; x\in[0,2\pi] }$}&[0,3]  \\
\cline{2-5}
&Cardiidae$^2$&2.718 &{\footnotesize $\makecell[l]{(0.12+x\sin^2(0.5y)) \hat\bfe_1 + (0.1-0.75x\sin y) \hat\bfe_2+\\  
 0.25(1-e^{-12x})\sin(0.5y)\tanh(4.5\cos x-3.15) \hat\bfe_3,\\ x\in[0.95\pi,2.05\pi]}$}&[0,3] \\
 \cline{2-5}
&Pectinidae &3 &{\footnotesize $\makecell[l]{(1.05 \sin x-0.7) \hat\bfe_1 + \cos x \hat\bfe_2 + 
 (0.035 \sin 37x-0.25) \hat\bfe_3,\\ x\in[0.07,1.2],\; y\in[0,\pi]}$}&[0,3.5] \\
\hline
Scaphopoda&Dentaliidae & 8& {\footnotesize $(\sin x-6) \hat\bfe_1 + \cos x \hat\bfe_2+10\hat\bfe_3,\; x\in[0,2\pi]$}& [0,1.2]\\
\hline
Monoplacophora&Tryblidiida& 16.12& {\footnotesize $(\sin x-0.8) \hat\bfe_1 + \cos x \hat\bfe_2+0.5\hat\bfe_3,\; x\in[0,2\pi]$}& [0,1.4]\\
\hline
\multicolumn{5}{|p{6.75in}|}{$\cdot$ Above, $\bfw$ is fixed as $(0,1,0)$, and $\hat\bfe_1=(1,0,0), \hat\bfe_2=(0,1,0)$, and $\hat\bfe_3=(0,0,1)$.
}\\
\multicolumn{5}{|p{6.75in}|}{$\cdot$ {The functions for $\bfy_0$ are empirical fits. }}\\
\hline
\multicolumn{5}{|p{6.75in}|}{$^1$For the family Patellidae, (\ref{seashell}) degenerates to $\bfy(t,x)=\lambda^t\bfy_0(x)$, so  $\bfw$ vanishes. Since $\lambda$ and $t$ are coupled, rescaling them could leave the configuration unchanged. We present one choice here. The configuration of the shell depends primarily on the position of $\bfy_0$.}\\
\multicolumn{5}{|p{6.75in}|}{$^2$This corresponds to the 3D bivalve shell shown in Figure \ref{spine}(B), which belongs to the family Cardiidae.
}\\
\hline
\end{tabular}
\end{table}

\begin{figure}[!ht]
\centering
\includegraphics[width=\textwidth]{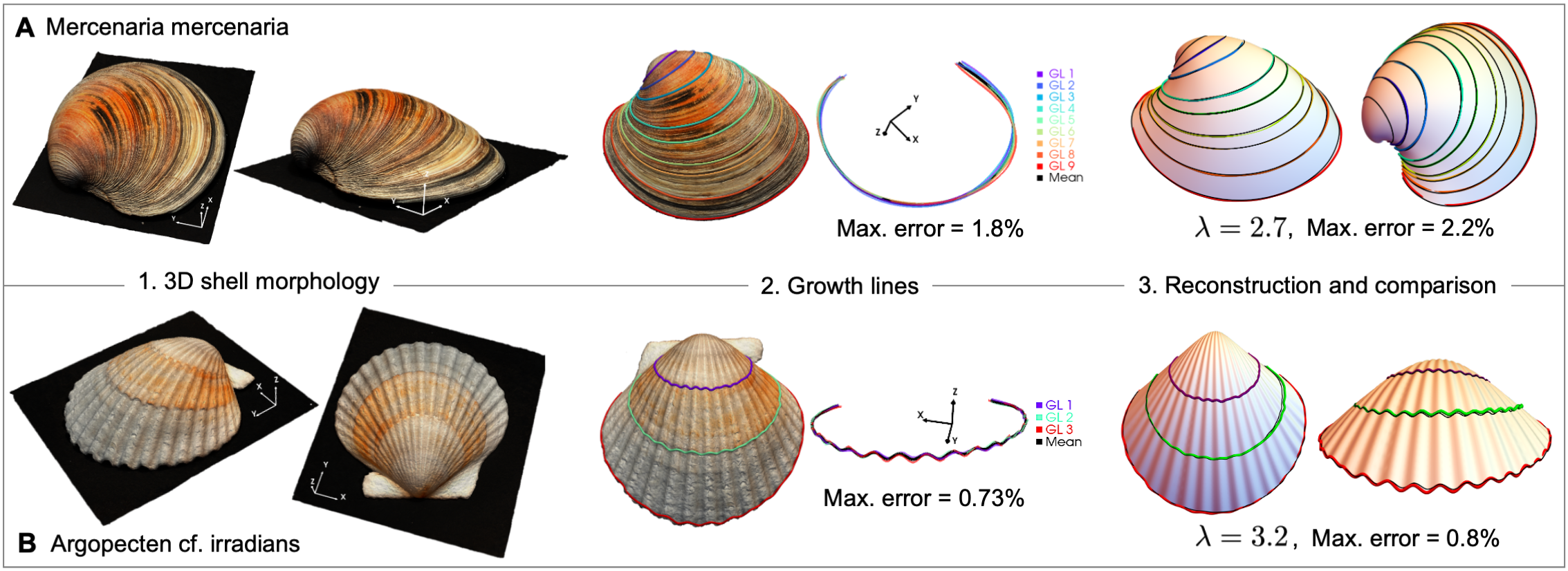}
\caption{ Quantitative comparison of theory with shells. (A) Clam (Mercenaria mercenaria) and (B) scallop (Argopecten cf.~irradians). From left to right: (1) 3D shell morphology obtained by photogrammetry; (2) extracted growth lines and their optimal alignment after removing rigid-body motions and scaling, showing that successive growth lines are self-similar with a maximum normalized error of 1.8\% (A) and 0.73\% (B); (3) shells reconstructed using the proposed Lie-group model with the mean growth line obtained from (2) and fitted scaling factors $\lambda=2.7$ (A) and $\lambda=3.2$ (B); and the comparison between observed (colored) and predicted (black) growth lines.
}
\label{fig:shell_comp}
\end{figure} 

{ A quantitative comparison between the theory and actual shells is shown in Figure \ref{fig:shell_comp} for a clam (Mercenaria mercenaria) and a scallop (Argopecten cf.~irradians). Since shell growth proceeds by material accretion at the shell margin, successive growth fronts are permanently preserved as growth lines. Thus, a mature shell records its developmental history, allowing the entire growth process to be reconstructed from a single specimen.  The shell morphology, including the growth lines, was reconstructed using photogrammetry (part 1). The extracted growth lines were translated, scaled, and rotated by best fit to reveal a common curve (part 2): the maximum error (maximum distance between two curves normalized by its length) is 1.8\% for the clam and 0.73\% for the scallop, thereby demonstrating that the growth lines are self-similar.  A scaled mean curve is taken as $\bfy_0$, and the Lie group is applied to it to reconstruct the entire shell (part 3); the resulting error (distance between the reconstructed and observed growth lines normalized by their length) is less than 2.2\% for the clam and 0.8\% for the scallop, demonstrating excellent agreement between the observed morphology and the theoretical prediction.  A third case involving a perturbation in $\bfy_0$ is studied later.}


Figure \ref{fig:shells}(B) shows a phylogenetic tree of (shell-forming) mollusks \cite{chen2025genome}, together with the Lie group scaling factor $\lambda$ used to reproduce the shell shape.  We observe that while $\lambda$ varies across molluscan groups, it takes very similar values within a class (with each class represented by a distinct color in the figure).  This evolutionary consistency of the parameters from the phylogenetic tree strengthens the conclusion that the growth mechanism is related to a Lie group structure.

\paragraph{Environmental conditions and reparameterization}

A remarkable aspect of molluscan shells is the robustness of shape.  Each species produces a shell with a characteristic shape independent of the environmental conditions { that can affect growth rates}, such as nutrition, pH, temperature, the concentration of ions, and growth rhythms.   Mathematically, the varying growth rates amount to a reparameterization of the group elements and thus of the shell surface.  To formalize this, we introduce a change of time variable $t\to f(t)$ to the shell surface given by (\ref{seashell}) to get $\bfy(t,x)=\lambda^{f(t)}\bfR(f(t))\bfy_0(x)$,
where $f(t)$ is a smooth function on $t$.  In this case, the new element of Lie algebra is  $ \bfA_{\rm new}=\dot f(t)\bfA$, which preserves the shell morphology. {Therefore, the resulting shell morphology is invariant to variations in growth rates
.} The time rescaling function $f(t)$ can be specified based on the actual temporal evolution of shell growth, like the Gompertz growth function \cite{hollyman2018age} and Von Bertalanffy's growth equation \cite{shelmerdine2007size}.  

\begin{figure}[t]
\centering
\includegraphics[width=4.5in]{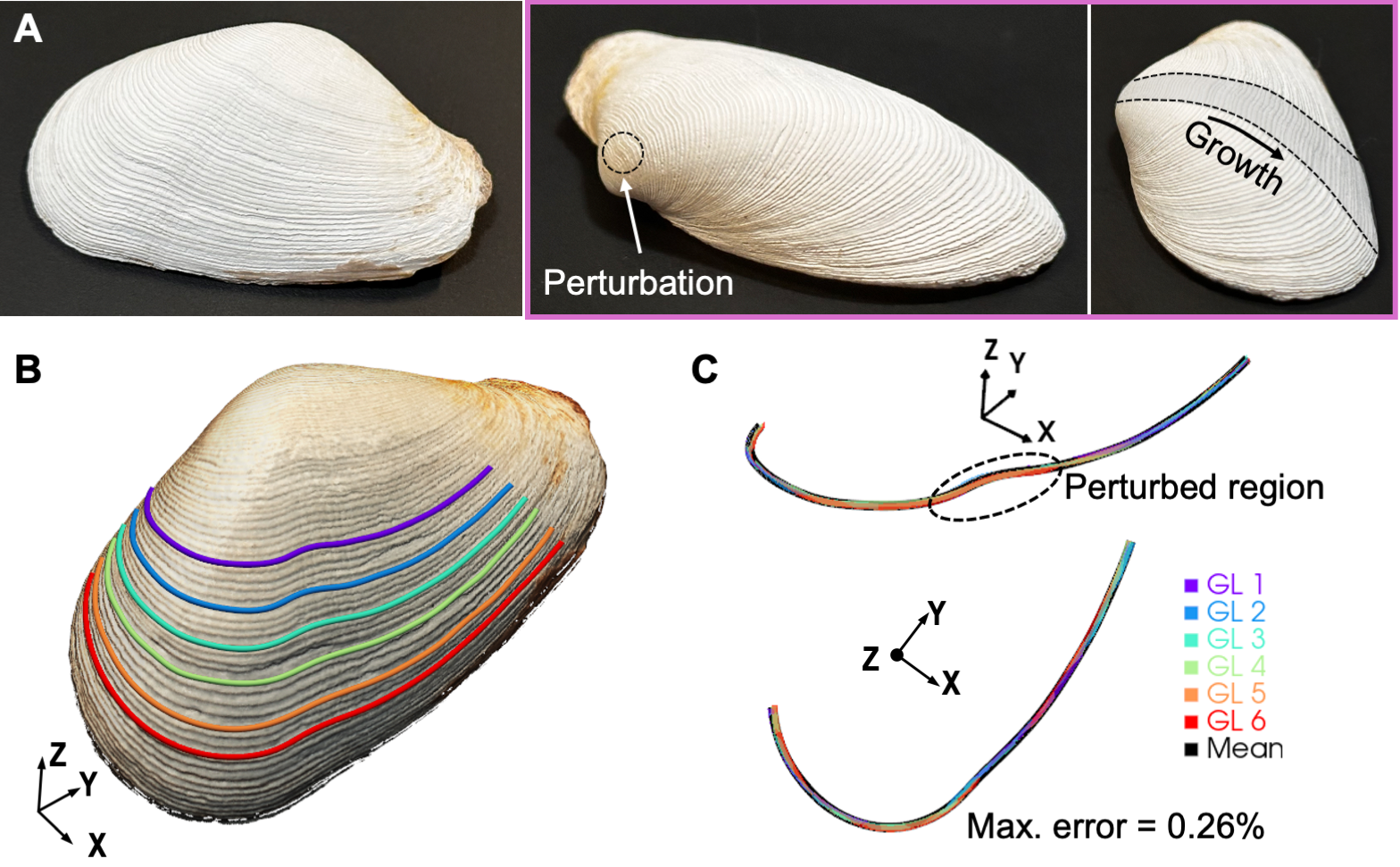}
\caption{Propagation of a perturbation during shell growth in a bent-nose clam (Macoma cf. nasuta). (A) Shell morphology showing a localized perturbation and its propagation during subsequent growth. (B) Growth lines extracted from the 3D morphology obtained by photogrammetry. (C) Growth lines after optimal alignment by scaling and rotation, showing that they remain self-similar while the perturbation propagates through subsequent growth under the Lie-group action.}
\label{fig:perturbation}
\end{figure} 

{
\paragraph{Perturbation to the growth front}
Injury or other damage can perturb the growth front at a particular time. Figure \ref{fig:perturbation} shows that such a perturbation propagates through subsequent growth in a bent-nose clam (Macoma cf. nasuta) without disrupting the underlying self-similar growth, with a maximum deviation of only 0.26\%. This demonstrates that the Lie-group structure is robust to perturbations of the growth front. Moreover, because the perturbation remains localized, the overall shell morphology is also robust to such perturbations.
}

\section{Local accretion laws}

\begin{figure}[t]
\centering
\includegraphics[width=0.73\textwidth]{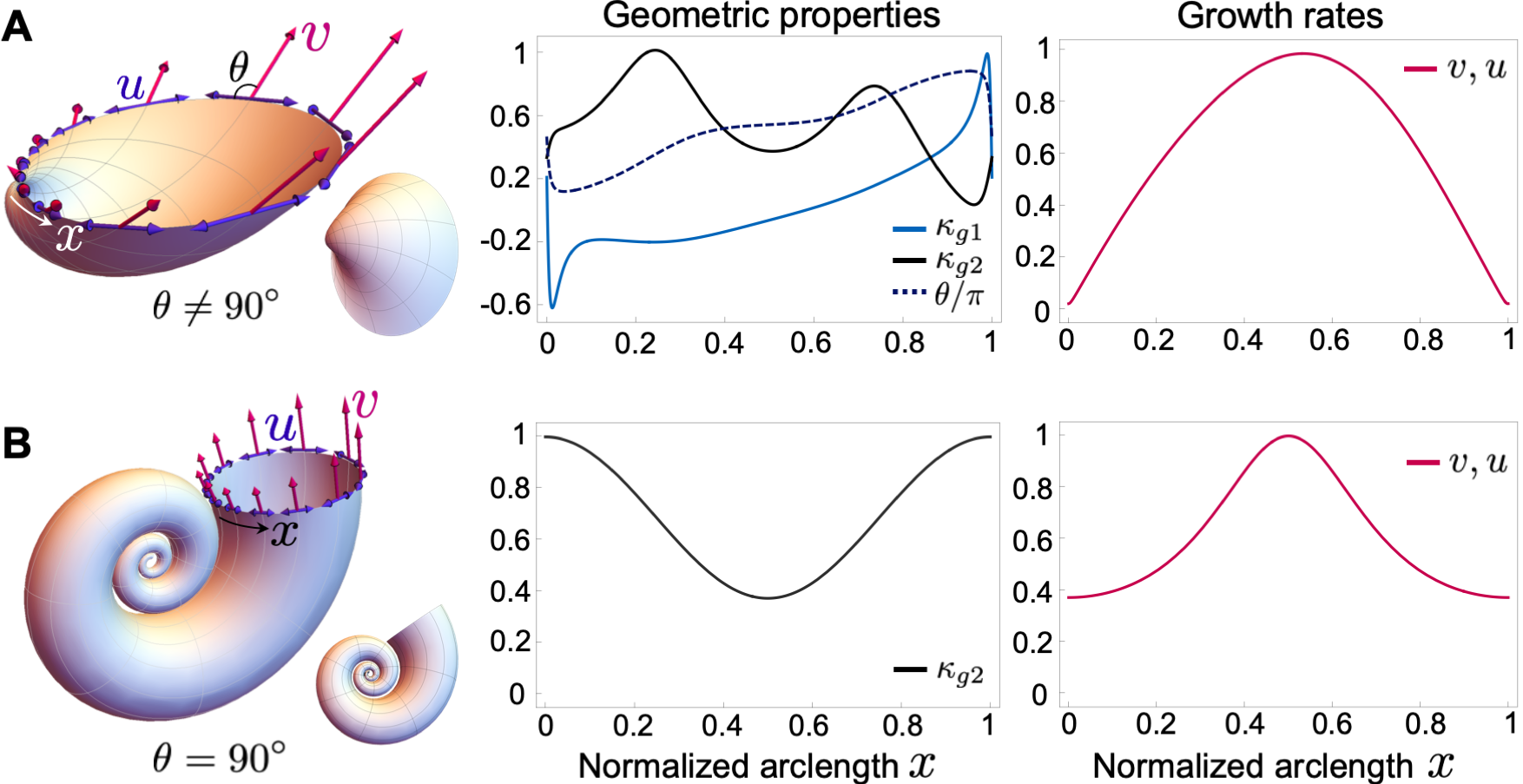}
\caption{Growth rates and local geometric information involved in the growth laws for (A) a bivalve shell, where the growth direction is not always perpendicular to the growth front ($\theta\neq90^\circ$), and (B) a nautilid shell with a perpendicular growth direction ($\theta=90^\circ$). Arrows on the left indicate the growth rates in the growth and lateral directions. The growth rates, geodesic curvatures, and total length of the growth front are normalized.}
\label{fig:law}
\end{figure}

The success of the Lie group framework described above concisely links the observation of the trajectory of the growing edge to the overall shape of the shells.  The further connection to phylogeny provides additional support for this framework.  However, it is still descriptive and leaves open the question of what biochemical processes can give rise to this evolution of the growing edge.

Within this framework, the shell admits a natural parametrization in terms of time $t$ and a parameter $x$ along the edge (not necessarily orthogonal to
$t$). This parametrization in turn defines two velocity components of the growing edge along these directions:
\begin{equation}
v=\left|\frac{\partial \bfy}{\partial t}\right|, \quad u=\left| \frac{\partial \bfy}{\partial x} \right|;
\end{equation}
$v$ is the growth rate along the growth direction (i.e., along the solid red lines of Figure \ref{fig:lie_elements}(A)), and $u$ is the lateral stretch of the growing edge (i.e., along the dashed black lines of Figure \ref{fig:lie_elements}(A)).  Note that $u, v$ depend on the parametrization, and in particular $v$ is not necessarily the parameterization-free normal velocity.  It turns out that the growth law (\ref{seashell}) is equivalent to the following relationship between the velocity components and the local curvature at the growing edge:
\beq
v(t,x)= u(t,x)=\frac{\log\lambda\sin\theta(x)+\theta_x(x)}{\kappa_{g2}(t,x)-\kappa_{g1}(t,x)\cos\theta(x)} \ .\label{eq:acc}
\eeq
Here, $\theta(x)\in(0,\pi)$ is the angle between the parameter curves at the growth front and it is independent of $t$ since the surface defined by (\ref{seashell}) is conformal, and $\theta_x$ denotes its derivative with respect to $x$; $\kappa_{g1}$ and $\kappa_{g2}$ are the geodesic curvatures of the growth trajectory at fixed $x$ and the growth edge at fixed $t$, respectively.  The geodesic curvature is the curvature of the orthogonal projection of the curve onto the tangent plane of the surface at $\bfy(t,x)$.  Figure \ref{fig:law} illustrates the growth rates and the local geometric information in the cases of $\theta\neq90^\circ$ and $\theta=90^\circ$, respectively. Specially, when $\theta=90^\circ$, the growth laws simplifies to
\beq
v(t,x)= u(t,x)=\frac{\log\lambda}{\kappa_{g2}(t,x)} \ , \label{eq:acc90}
\eeq
and depends only on $\kappa_{g2}$. We refer the reader to Appendix \ref{app:geom} for geometric preliminaries, 
and Theorem \ref{theorem1} and Remark \ref{remark1} in Appendix \ref{app:thm}  for a precise statement.

{ 
\paragraph{Curvature and self-assembly}
The exact mechanisms that lead to such a local growth law are not known.}  Curvature-regulated growth is common in both physical and natural systems. In physical systems,  curvature controls crystal growth through the Gibbs-Thomson effect \cite{langer,kammer2006morphological,asta2009solidification} and governs processes such as meandering river evolution \cite{finotello2024vegetation}.  Examples in natural systems include vascular morphogenesis \cite{luciano2024multiscale,bade2017curvature,mandrycky20203d} and tissue growth \cite{nelson2009geometric}.  Curvature reshapes the local physical or chemical environment, such as altering stress distributions, cell arrangements, interfacial biochemical gradients, { electrostatic fields}, or transport processes. These curvature-induced changes, in turn, modulate the underlying growth, migration, or erosion processes, whereby curvature emerges.

{  It is known that the mantle drives processes involving the assembly of nanoparticles, leading to the growth of the shell \cite{addadi2006mollusk,hovden2015nanoscale}. In other words, growth is mediated by the assembly of nanoparticles with defined shapes and sizes. It is also known that the self-assembly of nanoparticles can lead to, and be influenced by, curvature, including geodesic curvature \cite{laradji2026ordered,santangelo2007geometric}. Therefore, it is plausible that the local growth law involves geodesic curvature. We now provide a heuristic argument in the special case of orthogonal parametrization, where the growth law simplifies to (\ref{eq:acc90}). Recall that $u$ is the velocity along the edge, while $v$ is the velocity normal to the edge. The addition of a nanoparticle of defined shape and size will lead to proportional growth along and normal to the edge. Thus, up to parametrization, we expect $u=v$, consistent with (\ref{eq:acc90}). Now, since each nanoparticle is extremely small compared to the shell, we may consider the growth process to occur on the tangent plane. The geodesic curvature $\kappa_{g2}$ is the curvature of the growth line on this plane. We expect the addition of nanoparticles of defined shape and size to be easier in straighter regions of the growth front than in more highly curved regions, leading to an inverse dependence of the velocities on geodesic curvature, consistent with (\ref{eq:acc90}). This heuristic argument shows how the local growth law can arise from the process of nanoparticle assembly. The exact details, as well as the generalization to non-orthogonal parametrization, remain to be understood.
}

\paragraph{Environmental conditions and reparameterization}
We noted earlier that the shell shape remains invariant in a species even though the growth rates can depend on environmental conditions.  We formalized this by introducing a rescaling $f(t)$ and showed that the Lie group structure remains unchanged by the rescaling.  The same thing is true for the accretion laws (\ref{eq:acc}).  For example, the Gompertz law \cite{hollyman2018age}, $f(t)=-c\lambda^{-t},$ where $c>0$ is a growth coefficient associated with the mollusk, leads to the relationship 
\beq
v=\frac{c}{\lambda^t}\frac{\log\lambda\sin\theta+\theta_x}{\kappa_{g2}-\kappa_{g1}\cos\theta},\label{size_change}
\eeq
which preserves a relationship between the accretion velocity and the local curvature, up to a rescaling of time.  A similar transformation is true for  Von Bertalanffy's growth equation \cite{shelmerdine2007size}.

\paragraph{Perturbation to the growth front}
The result in Figure \ref{fig:perturbation} shows that the local growth law, and thus the underlying growth mechanism, remains unchanged in the bent-nose clam (Macoma cf.~nasuta) under perturbations to the growth front.

%
%
%

\section{Discussion}

The results above show that molluscan shells can be described in a concise mathematical manner as conformal Lie groups characterized by a scaling factor, a rotational axis, and a seed curve.  The scaling factor has a close relation with the phylogenetic tree of mollusks.  Further, we have shown that the Lie-group description of shell growth is equivalent to biologically passive, curvature-driven accretion at the shell edge. { We have provided a heuristic argument linking the local growth law to the self-assembly of nanoparticles.  We show that shell morphology remains invariant under environmental variations that alter growth rates, providing a potential mechanism for the developmental stability of shell form despite environmental variability, while the overall shape is also robust to perturbations in the growth line. Together, these results suggest that once biologically programmed through the selection of Lie-group parameters, shell growth can proceed according to local growth laws, without requiring active shape control by the mantle.
}


{The proposed framework differs from existing approaches in several important respects. Kinematic models typically describe shell form either through direct geometric parameterizations of the global morphology (e.g., \cite{thomson1917growth,contreras2023mathematical}) or by solving local curve-evolution equations governed by prescribed growth velocities (e.g., \cite{moulton2014surface}). In contrast, in the proposed Lie-group framework,  the global shell shape emerges directly from the underlying growth mechanism, establishing an explicit link between local growth laws and macroscopic morphology.  More broadly, recent studies have demonstrated how geometric principles can illuminate the relationships among biological form, development, function, and evolution \cite{al2021geometry}. In contrast to this approach that focuses on identifying geometric descriptors or growth laws for a particular biological system, the present work develops a common group-theoretic framework for shell morphogenesis, and a wide range of shell characteristics arise from a common mathematical structure. Within the same Lie-group formulation, symmetry, chirality, self-similar growth, phylogenetic scaling relations, growth-line patterns, shell thickening, and ornamentation emerge naturally as consequences of the underlying group action. This provides a unified geometric description linking local growth kinematics to diverse aspects of shell morphology.}

Beyond molluscan shells, such morphology is widespread in nature, ranging from animal tusks, horns, beaks to insect spines and mouthparts, plant tips and thorns, and even microbial shells and skeletons. Their growth dynamics may also be accounted for by the proposed laws.

\paragraph{Spines and fronds}  
\begin{figure}[t]
\centering
\includegraphics[width=0.65\textwidth]{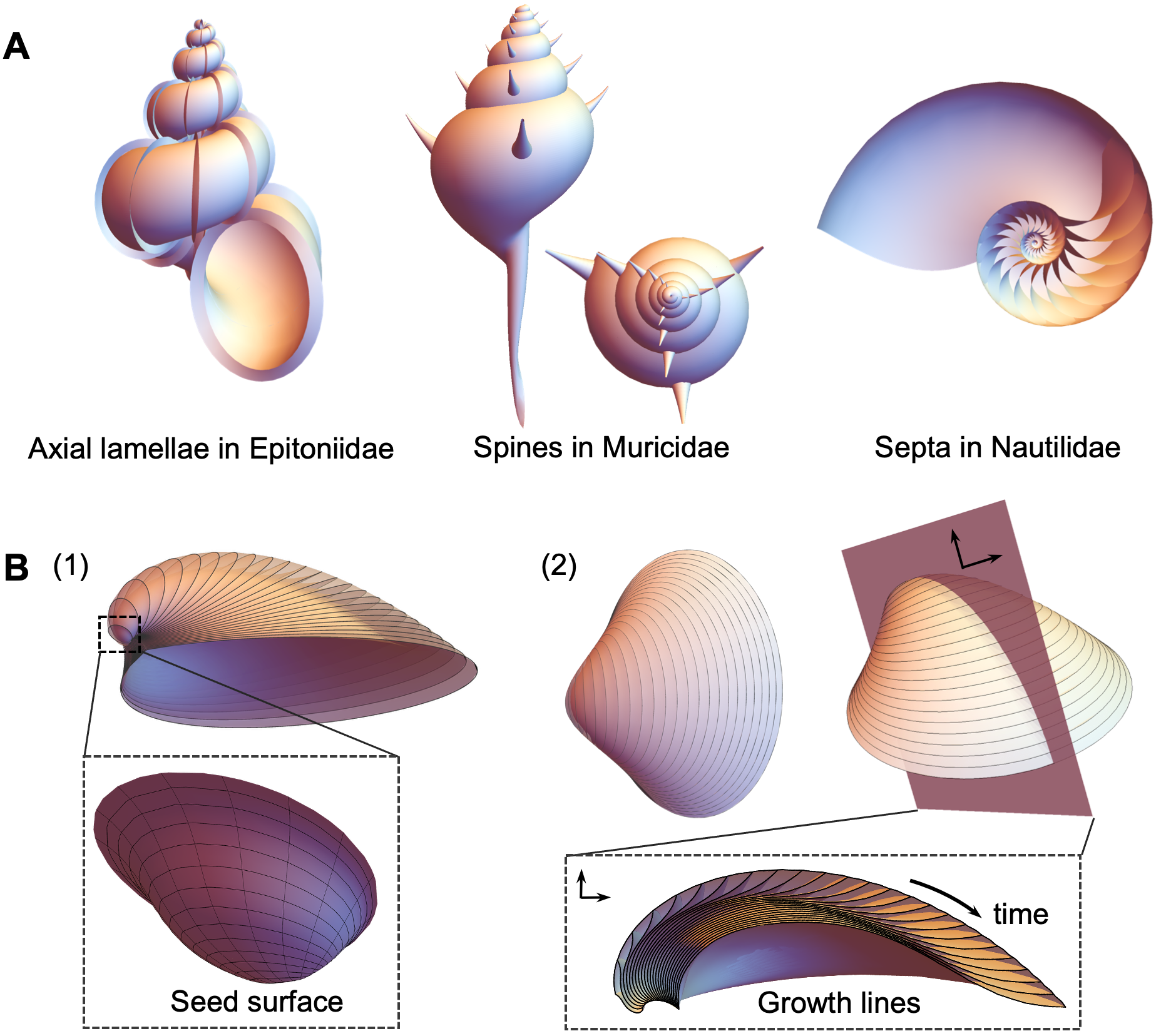}
\caption{A. Representative shell ornamentation generated by discrete subgroups of the Lie group used to generate the main shell surface.  B. Representative bivalve shell obtained by the Lie group, where (1) shows the seed surface, and (2) shows the growth lines on the shell's outer surface and in the cross-section during evolution. The corresponding Lie group parameters and the seed surface are summarized in Table \ref{tab:param}, and a video illustrating the growth process is provided in the Supplemental Information.}
\label{spine}
\end{figure} 

Some molluscan shells, such as those of Murex and Epitonium scalare, display highly complex forms sculptured with spines or fronds. This process involves temporal changes in gene expression. We hypothesize that variations in gene expression could lead to the emergence of distinct governing growth laws. In this case, each spine or frond could follow a Lie group orbit distinct from that of the main shell surface, producing a unique sculpted form. More interestingly, the distribution of spines, fronds, and respiratory tremata (as observed in the Haliotis shell) typically follows a specific orbit of a subgroup of the primary Lie group, as illustrated in Figure \ref{spine}(A). These patterns underscore the central role of group structure in molluscan shell growth.

\paragraph{Three dimensions}  Strikingly, the Lie-group representation that governs surface morphology also extends through the whole shell body, regulating its formation across the entire thickness. Bivalves are a canonical example whose growth lines across the thickness are well documented \cite{jones1996marking,jones1983sclerochronology}. A typical bivalve shell comprises three layers: from the outside in, the periostracum, the prismatic layer, and the nacreous layer; the latter two are mineralized. Compared to the nacreous layer with finer growth lines, the prismatic layer displays more widely spaced lines as a result of its rapid deposition. Noting that the evolution of these growth lines adheres to the local growth mechanisms described above, it follows that the entire shell body can be described using Lie groups (\ref{group}). In this case, the shell body is obtained by replacing the seed curve $\bfy_0(x), x\in[L_1,L_2]$ in (\ref{seashell}) with a seed surface $\bfy_0(x,y),x\in[L_1,L_2],y\in[L_3,L_4]$. As illustrated in Figure \ref{spine}(B), the bivalve shell is generated by applying the Lie group to the seed surface in B(1), yielding the growth lines in B(2) that closely match those of a real shell. 


\paragraph{Mechanical forces}    Soft-bodied molluscs generally have a bilaterally symmetric body, yet their shells often adopt a helicospiral form. \cite{chirat2021physical} hypothesized that a growth mismatch between the shell tube and the symmetric body generates mechanical forces, which are relieved by a body twist, resulting in shell asymmetry. However, this hypothesis can not explain why evolution does not favor a helicospiral body that matches the snail shell, or why the shell form of the same species remains so consistent across different individuals under varying soft-body growth conditions.  Here our growth laws (\ref{eq:acc}) indicate that shell growth is fully determined by a single scaling parameter, together with regulation via sensing the intrinsic geodesic curvature of the growth front. 
The shell's twisting feature then emerges naturally from these laws when applied to a protoconch, whose twist is induced by spiral cleavage at an early stage of development.

\paragraph{Role of group structure} During growth, organisms typically exhibit geometric similarity in morphology and dynamic similarity in motion \cite{liu2025comprehensive}. This geometric similarity indicates that the organism often grows according to conformal Lie groups; such growth may result from the evolutionary processes in which the organism retains an efficient dynamic performance under environmental pressure as its body size increases. This Lie group representation provides an opportunity to explore growth laws for general organisms using the same framework developed here. Moreover, for three-dimensional bulk objects, the only admissible conformal mapping is uniform scaling in all dimensions. Different from the surface of molluscan shells, which also involves rotation, this reduced complexity may facilitate the identification of the growth laws in the three-dimensional bodies of organisms.


\paragraph{Outlook}This framework holds potential for both basic and applied sciences. In biomimetic engineering, it offers a computational grammar for designing materials and structures that replicate the growth and functionality of real organisms; It also motivates an evolution-inspired design paradigm, in which structures can be grown to adapt to environmental stimuli, giving rise to novel forms of soft robotics. In manufacturing, these local growth laws can be integrated with physical or chemical reaction systems to grow targeted structures \cite{kim2025morphogenic}, facilitating sustainable production through substantial energy savings. In pathology, the ability to model the geometric progression of forms could offer new insights into aberrant growth patterns, potentially informing predictive diagnostics and targeted therapies. Future research could explore their applicability to other classes of biological structures and uncover new growth principles for general forms. Another important direction lies in integrating these fundamental laws with empirical growth laws from chemical and physical systems, aiming to advance novel manufacturing methods.

\paragraph{Acknowledgement.} We thank Professors Rob B.~Phillips and Lea Goentoro for insightful discussions.  H.L. acknowledges the support of the Drinkward Postdoctoral Fellowship and K.B. acknowledges the support of Howell N.\ Tyson, Sr., Professorship at Caltech.

\paragraph{Competing interest.} The authors declare no competing interests.

\paragraph{Data sharing.} All data required for this work are incorporated in the manuscript.


\newpage
\appendix
\section*{Appendix}
\renewcommand{\theequation}{A\arabic{equation}}
\renewcommand{\thefigure}{A\arabic{figure}}
\setcounter{equation}{0}
\setcounter{figure}{0}


\section{Geometric preliminaries} \label{app:geom}

We recall some basic properties of manifolds (see, for example, \cite{kuhnel2015differential,kreyszig1968introduction}).  Consider a  $C^3$-smooth surface ( two-dimensional manifold embedded in three-dimensional Euclidean space) parametrized by parameters $\alpha^1, \alpha^2$ and described as  $\bfy(\alpha^1, \alpha^2)$.  Define the natural basis of the tangent space as $\bfe_i = \bfy_{,i}, \ i = 1, 2$ where the subscript $,i$ denotes partial differentiation with respect to $\alpha_i$, and the reciprocal basis of the tangent space as $\bfe^i$ through the relation $\bfe _i \cdot \bfe^j = \delta_i^j$.  The unit normal to the manifold is $\hat \bfn = (\bfe_1 \times \bfe_2)/|\bfe_1 \times \bfe_2|$, and the projection that maps a three-dimensional vector to the tangent space is
\beq
\bfP = \bfI - \hat \bfn \otimes \hat \bfn = \sum{\bfe_i \otimes \bfe^i}\label{P}.
\eeq

The components of the metric tensor (in the two bases), the second fundamental form, and the Christoffel symbols are defined as
\beq
g_{ij}:=\bfe_i\cdot\bfe_j, \quad g^{ij}:=\bfe^i\cdot\bfe^j, \quad b_{ij}:=\bfy_{,ij}\cdot\bfn,  \quad {{\Gamma_i}^k}_j:=g^{km}\bfy_{,ij}\cdot\bfe_m,\ i,j,k,m=1,2,
\eeq
respectively.  Now, the smoothness of the surface implies the compatibility conditions (equality of mixed derivatives)
\beq
\bfy_{,12}=\bfy_{,21},\quad \bfy_{,112}=\bfy_{,121},\quad \bfy_{,221}=\bfy_{,212}. \label{compatibility}
\eeq
We can use the definition of the natural basis, the second fundamental form, and the Christoffel symbols to show $\bfy_{,ij}={{\Gamma_i}^k}_j\bfe_k+b_{ij}\bfn$, and therefore the compatibility conditions can be represented as  
\beq
\left\{\begin{array}{rcl}
b_{12}&=&b_{21},\\
{{\Gamma_1}^1}_2&=&{{\Gamma_2}^1}_1,\\
{{\Gamma_1}^2}_2&=&{{\Gamma_2}^2}_1,\\
b_{11,2}-b_{12,1}&=&b_{11}{{\Gamma_1}^1}_2+b_{12}({{\Gamma_1}^2}_2-{{\Gamma_1}^1}_1)-b_{22}{{\Gamma_1}^2}_1,\\
b_{12,2}-b_{22,1}&=&b_{11}{{\Gamma_2}^1}_2+b_{12}({{\Gamma_2}^2}_2-{{\Gamma_1}^1}_2)-b_{22}{{\Gamma_1}^2}_2,\\
g^{hi}(b_{jl}b_{kh}-b_{jk}b_{lh})&=&({{\Gamma_j}^i}_l)_{,k}-({{\Gamma_j}^i}_k)_{,l}+{{\Gamma_j}^p}_l{{\Gamma_p}^i}_k-{{\Gamma_j}^p}_k{{\Gamma_p}^i}_l,\\
&&\hfill{\rm with}\; i=1,2, \quad jlk=112,221.  
\end{array}\right.\label{GW-GC}
\eeq
The equations (\ref{GW-GC})$_{4,5}$ are the Codazzi-Mainardi formulae, and the four relations in (\ref{GW-GC})$_{6}$ are known as the Gauss equations. 

We now specialize to the setting in which the shell (surface) evolves over time $t$.  We parametrize the shell as $\bfy(t,x)$, where $x$ parametrizes the edge of the shell at any fixed time.   Here we identify $\alpha^1 = t, \alpha^2 = x$. Let $\theta$ be the angle between the two coordinate curves.  It follows 
\beq
\bfe_1=\frac{\partial \bfy}{\partial t}=\bfv=v \hat{\bfe}_1,\quad
\bfe_2=\frac{\partial \bfy}{\partial x}=\bfu=u\hat{\bfe}_2, \quad 
\bfe^1=-\frac{1}{v\sin\theta} \hat \bfe_1^\perp,\quad
\bfe^2=\frac{1}{u\sin\theta} \hat \bfe_2^\perp.  
\eeq
Thus, the first natural basis vector ($\bfe_1$) is identified as the real growth velocity, and the second ($\bfe_2$) is identified as the lateral dilatation of the growing edge of the shell.  Also, the angle $\theta$ between the natural basis vectors coincides with the angle that the growth velocity $\bfv$ makes with the edge.   Now, the triads, $\{\hat{\bfe}_1, \hat{\bfe}_1^\perp, \hat{\bfn}\}$ and $\{\hat{\bfe}_2, \hat{\bfe}_2^\perp, \hat{\bfn}\}$ are two Darboux frames (orthonormal basis in three-dimensional Euclidean space with one vector tangent to a curve and one vector normal to the surface) along the $t$ and $x$ curves, respectively.  We define $(\kappa_{g1},\kappa_{n1},\tau_{r1})$ and $(\kappa_{g2},\kappa_{n2},\tau_{r2})$ to be geodesic curvature, normal curvature, and geodesic torsion of curves at fixed $t$ (black curve in Figure \ref{fig:lie_elements}(A)) and $x$ (red curve in Figure \ref{fig:lie_elements}(A)) respectively through
\beq
\left\{\begin{array}{lcl}
(\hat \bfe_1)_{,t}&=& v(\kappa_{g1}\hat \bfe_1^\perp+\kappa_{n1}\bfn),\\
(\hat \bfe_1^\perp)_{,t}&=&v(-\kappa_{g1}\hat \bfe_1+\tau_{r1}\bfn),\\
\bfn_{,t}&=&v(-\kappa_{n1}\hat \bfe_1-\tau_{r1}\hat \bfe_1^\perp),
\end{array}\right.\quad{\rm and}\quad
\left\{\begin{array}{lcl}
(\hat \bfe_2)_{,x}&=& u(\kappa_{g2}\hat \bfe_2^\perp+\kappa_{n2}\bfn),\\
(\hat \bfe_2^\perp)_{,x}&=&u(-\kappa_{g2}\hat \bfe_2+\tau_{r2}\bfn),\\
\bfn_{,x}&=&u(-\kappa_{n2}\hat \bfe_2-\tau_{r2}\hat \bfe_2^\perp).
\end{array}\right.
\label{darboux}
\eeq
We now have an explicit representation of the metric components as 
\beq
\quad g_{ij}=\bfe_i\cdot\bfe_j=\left[\begin{array}{cc}
     v^2&uv\cos\theta  \\
uv\cos\theta & u^2
\end{array}\right],\quad g^{ij}=\bfe^i\cdot\bfe^j=\left[\begin{array}{cc}
     \frac{1}{v^2\sin^2\theta}&\frac{-\cos\theta}{uv\sin^2\theta} \\
    \frac{-\cos\theta}{uv\sin^2\theta}  & \frac{1}{u^2\sin^2\theta}
\end{array}\right].
\eeq
Assume that $\theta$ is independent of $t$. Then the second fundamental form, as well as the Christoffel symbols, are found by
\beqs
&&b_{ij}=\left[\begin{array}{cc}
    v^2\kappa_{n1} &vu(\kappa_{n1}\cos\theta+\tau_{r1}\sin\theta)  \\
    vu(\kappa_{n2}\cos\theta-\tau_{r2}\sin\theta) & u^2\kappa_{n2}
\end{array}\right],\nonumber\\
&&\begin{aligned}
{{\Gamma_1}^1}_1&=\frac{v_t}{v}-\frac{v\kappa_{g1}\cos\theta}{\sin\theta},&{{\Gamma_1}^2}_1&=\frac{v^2\kappa_{g1}}{u\sin\theta},\\
{{\Gamma_1}^1}_2&=\frac{v_x}{v}+\frac{(\theta_x-u\kappa_{g2})\cos\theta}{\sin\theta},&{{\Gamma_1}^2}_2&=\frac{v\kappa_{g2}}{\sin\theta}-\frac{v\theta_x}{u\sin\theta},\\
{{\Gamma_2}^1}_1&=-\frac{u\kappa_{g1}}{\sin\theta},&{{\Gamma_2}^2}_1&=\frac{u_t}{u}+\frac{v\kappa_{g1}\cos\theta}{\sin\theta},\\
{{\Gamma_2}^1}_2&=-\frac{u_x}{v^2\sin\theta},&{{\Gamma_2}^2}_2&=\frac{u_x}{uv}\frac{\cos\theta}{\sin\theta}-u\kappa_{g2}.\label{chris_b}
\end{aligned}
\eeqs

Finally, we recall the definition of a covariant derivative of a vector field that describes the tangential component of the derivative of a tangential vector $\bfz = \hat z^i \bfe_i $ along a curve on the manifold,
\beq
\nabla \bfz = \left(\frac{\partial \hat z^i}{\partial x^j}+\hat z^k{{\Gamma_k}^i}_j \right)\bfe_i\otimes\bfe^j.\label{covariant_def}
\eeq
Note that it is the projection of the directional derivative onto the tangent plane of the surface.  Applying (\ref{covariant_def}) to the growth velocity $\bfv$ that coincides with the first natural vector, we find
\beq \label{eq:covv}
\nabla \bfv = \nabla \bfe_1 = {{\Gamma_1}^i}_j \bfe_i\otimes\bfe^j.
\eeq

\section{Relationship between global and local growth laws} \label{app:thm}

We now show that the global growth determined by the Lie algebra, $\bfy(t,x) = \lambda^t \bfR(t) \bfy_0(x)$, is equivalent to a growth law (\ref{growth_laws}) at the growing edge that depends only on the local geometry.  We have the following theorem.

\begin{theorem}
\label{theorem1}
Let $\mathcal{S}_t\subset C^3([0,t]\times[L_1,L_2])$ be a time-evolving manifold, whose growth front at $t$ is defined by 
\beq
\bfy(t,x)=\lambda^t\bfR(t)\bfy_0(x), \quad x\in [L_1,L_2],\; t\geq 0,\label{manifold}
\eeq
where $\lambda>0$ with $\lambda\neq1$; $\bfR(t)$, with $\bfR(0)=\bfI$, is a rotation through angle $t$ about a fixed axis $\bfw$; and $\bfy_0(x)=\bfy(0,x),x\in [L_1,L_2]$ is a closed seed curve satisfying $\bfy_{0,xx}(x)\cdot\bfn_0\neq0$; here $\bfn_0=\bfn(0,x)$ denotes the surface normal at $t=0$. Define the growth velocity and the lateral dilatation as $ \bfv(t,x) = \partial\bfy(t,x)/\partial t$ and $\bfu(t,x)=\partial\bfy(t,x)/\partial x$ and denote their magnitudes by $v$ and $u$, respectively.

Then a necessary and sufficient condition for the manifold (\ref{manifold}) to grow uniquely is that the magnitudes $v$ and $u$ satisfy 
\beq
u(t,x)=v(t,x)=\frac{\log\lambda\sin\theta(x)+\theta_{,x}(x)}{\kappa_{g2}(t,x)-\kappa_{g1}(t,x)\cos\theta(x)},\label{growth_laws}
\eeq
with the initial conditions of $\bfv$ and $\bfu$
\beq
\bfv_0=(\log\lambda\bfI+\bfW)\bfy_0(x), \quad \bfu_0=\bfy_{0,x}(x),\label{initial_vu}
\eeq
where $\theta$ is the angle between the two growth directions $\bfv$ and $\bfu$, independent of $t$; $\bfW=\bfw\times$; and $\kappa_{g1}(t,x)$ and $\kappa_{g2}(t,x)$ are the geodesic curvatures at $\bfy(t,x)$ with $x$ and $t$ held fixed, respectively.
\end{theorem}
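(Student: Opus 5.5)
The plan is to prove the two directions separately, working throughout with the Darboux-frame quantities and the Christoffel symbols (\ref{chris_b}) of Appendix~\ref{app:geom}. The key structural fact I will use is that $\bfA=\log\lambda\bfI+\bfW$ commutes with $\exp(t\bfA)=\lambda^t\bfR(t)$.

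\emph{Necessity.} Start from (\ref{manifold}) and compute
\[
\bfv=\bfA\bfy=\lambda^t\bfR(t)\bfA\bfy_0,\qquad \bfu=\lambda^t\bfR(t)\bfy_{0,x}.
\]
Both tangent vectors are images of their $t=0$ values under the same conformal map $\lambda^t\bfR(t)$. Consequently:
\begin{itemize}[label={}]
\item $\theta$ is independent of $t$, which justifies the standing assumption used in (\ref{chris_b});
\item $v=\lambda^t|\bfA\bfy_0|$ and $u=\lambda^t|\bfy_{0,x}|$, so $v_t/v=u_t/u=\log\lambda$;
\item $v/u$ is independent of $t$.
\end{itemize}
Because $v/u$ does not depend on $t$, I can fix the parametrization $x$ of the seed so that $|\bfy_{0,x}|=|\bfA\bfy_0|$. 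I read ``grow uniquely'' as including this normalization. It gives $u=v$ for all $t$, and the initial data (\ref{initial_vu}) hold trivially.

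The curvature relation then comes from the compatibility condition ${{\Gamma_1}^2}_2={{\Gamma_2}^2}_1$ in (\ref{GW-GC}), i.e.\ the tangential part of $\bfe_{1,x}=\bfe_{2,t}$. Inserting (\ref{chris_b}) with $u_t/u=\log\lambda$ gives
\[
\frac{v\kappa_{g2}}{\sin\theta}-\frac{v\theta_x}{u\sin\theta}=\log\lambda+\frac{v\kappa_{g1}\cos\theta}{\sin\theta}.
\]
Setting $u=v$ and solving for $v$ yields exactly (\ref{growth_laws}). The hypotheses $\lambda\ne1$ and $\theta\in(0,\pi)$ guarantee that the denominator $\kappa_{g2}-\kappa_{g1}\cos\theta$ is nonzero wherever $v\neq0$.

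\emph{Sufficiency.} Suppose a $C^3$ surface $\bfy(t,x)$ satisfies (\ref{growth_laws}) with the initial data (\ref{initial_vu}). I want to show it coincides with $\lambda^t\bfR(t)\bfy_0$. I would run the computation above backwards.
\begin{enumerate}[label={}]
\item Step 1. Combine (\ref{growth_laws}) with the compatibility relation ${{\Gamma_1}^2}_2={{\Gamma_2}^2}_1$ to recover $u_t/u=v_t/v=\log\lambda$.
\item Step 2. Use the remaining relations of (\ref{GW-GC}) to show that the full first and second fundamental forms evolve by the self-similar rule $g_{ij}(t)=\lambda^{2t}g_{ij}(0)$ and $b_{ij}(t)=\lambda^{t}b_{ij}(0)$. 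These are the fundamental forms of $\lambda^t\bfR(t)\bfy_0$.
\item Step 3. Apply the uniqueness part of the fundamental theorem of surfaces (Bonnet): two surfaces with the same $g_{ij},b_{ij}$ and the same initial frame along $t=0$ coincide. The initial frame is fixed by (\ref{initial_vu}), since $\bfv_0=\bfA\bfy_0$ and $\bfu_0=\bfy_{0,x}$ prescribe $\bfe_1,\bfe_2$ at $t=0$.
\end{enumerate}
As an alternative to Step 3, set $\bfz=\bfv-\bfA\bfy$ and aim to show that it satisfies a linear homogeneous ODE in $t$ with $\bfz(0,x)=0$, which would force $\bfz\equiv0$.

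I expect the main obstacle to be Step 2: showing that the single scalar law (\ref{growth_laws}), together with the Codazzi--Mainardi and Gauss equations, determines $b_{ij}$, and in particular the geodesic torsions and normal curvatures. This is where the hypothesis $\bfy_{0,xx}\cdot\bfn_0\neq0$ enters: it says $\kappa_{n2}\neq0$ at $t=0$. I would first try to use the Codazzi equations as a linear transport system in $t$ for $(\kappa_{n1},\tau_{r1})$ in terms of $\kappa_{n2}$. Nonvanishing $\kappa_{n2}$ should make this system solvable and unique, and Gr\"onwall on the difference from the explicit solution would then close the argument.
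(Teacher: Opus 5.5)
Your necessity argument is correct and is essentially the paper's, and your Steps~1 and~3 are fine. The gap is Step~2, which is not a verification but the entire content of sufficiency. After Step~1, and with $\theta$ independent of $t$, the law \eqref{growth_laws} has fixed only the intrinsic geometry, $g_{ij}(t,x)=\lambda^{2t}g_{ij}(0,x)$; it carries no extrinsic information. What remains is the Cauchy problem for the Gauss--Codazzi system with this metric and with data on $t=0$. The values $b_{12}(0,x)$ and $b_{22}(0,x)=\bfy_{0,xx}\cdot\bfn_0$ are read off from \eqref{initial_vu}. Then $b_{11}(0,x)$ is fixed by the Gauss equation because $b_{22}(0,x)\neq0$; you correctly located the role of this hypothesis.

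However, this is not a linear transport system for $(\kappa_{n1},\tau_{r1})$ with $\kappa_{n2}$ given. The Codazzi equations read $b_{12,t}=b_{11,x}+\dots$ and $b_{22,t}=b_{12,x}+\dots$. So $\kappa_{n2}$ is itself unknown for $t>0$, and $b_{11}$ must be eliminated through the quadratic relation $b_{11}b_{22}-b_{12}^2=K\det g$. The characteristic directions of the resulting nonlinear system are the asymptotic directions $b_{ij}\,d\alpha^i d\alpha^j=0$. It is therefore hyperbolic only where $K<0$. There $b_{22}\neq0$ says precisely that $t=0$ is non-characteristic, and a domain-of-dependence argument works. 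It is elliptic where $K>0$, as typically happens on the outer side of tube-like shells such as those in Figure~\ref{fig:lie_elements}.

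In elliptic regions the Cauchy problem in $t$ is ill-posed. So no inequality $E'\le CE$ for the difference from the explicit solution, with $C$ controlled by norms of the two solutions, can hold there; it would give $L^2$ stability, contradicting Hadamard-type ill-posedness. Uniqueness there needs a unique-continuation (Carleman- or Holmgren-type) argument, not Gr\"onwall. Your fallback with $\bfz=\bfv-\bfA\bfy$ meets the same obstruction, since $\bfz_{,t}$ contains $b_{11}\bfn$, which is exactly the undetermined extrinsic datum.

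The paper never writes $b_{ij}$ explicitly. It encodes the already-determined metric through a frame rotation $\bfR(t,x)$ with $\bfv=\lambda^t\bfR\bfA\bfy_0$ and $\bfu=\lambda^t\bfR\bfy_{0,x}$. The unknowns become $\bfW_1=\bfR^{\rm T}\bfR_{,t}$ and $\bfW_2=\bfR^{\rm T}\bfR_{,x}$, constrained by \eqref{eq1}--\eqref{w1w2}. This is the same Gauss--Codazzi system in moving-frame form, with three scalar unknowns $a,b,c$. The paper uses $\bfy_{0,xx}\cdot\bfn_0\neq0$ to obtain $(\bfW_1,\bfW_2)=(\bfW,\mathbf{0})$ at $t=0$. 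It then argues uniqueness by an energy estimate for \eqref{energy}, in which closedness of the seed curve is what eliminates the boundary terms from integrating by parts in $x$. Remark~\ref{remark1} records that without closedness uniqueness is left open.

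Your plan never uses closedness, another sign that the decisive step has not yet been confronted. Whichever formulation you adopt, the argument has to cope with the change of type described above.

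A minor slip in the necessity part: from $v(\kappa_{g2}-\kappa_{g1}\cos\theta)=\log\lambda\sin\theta+\theta_{,x}$, the denominator vanishes exactly where $\theta_{,x}=-\log\lambda\sin\theta$. The hypotheses $\lambda\neq1$ and $\theta\in(0,\pi)$ do not exclude this unless $\theta$ is constant.
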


\vspace{\baselineskip}
\begin{proof}
We begin by showing that the local growth law \eqref{growth_laws} is {\it necessary} for the global growth law \eqref{manifold}.  By differentiating \eqref{manifold} with time, we observe that $\bfv = \bfA \bfy$ with $\bfA = \log \lambda \bfI + \bfW$.   Therefore, the covariant derivative of $\bfv$ is 
\beq
\nabla\bfv=\bfP \, \frac{\partial \, (\bfA\bfy)}{\partial \alpha^i}\otimes\bfe^i=\bfP\bfA\frac{\partial\bfy}{\partial \alpha^i}\otimes\bfe^i=\bfP\bfA\bfe_i\otimes\bfe^i=\bfP\bfA\bfP. \label{governing}
\eeq
Combining with \eqref{eq:covv}, we conclude that 
\beq
{{\Gamma_1}^i}_j = \bfe^i \cdot (\nabla \bfv) \bfe_j = \log \lambda \delta^i_j + \bfe^i \cdot \bfW \bfe_j 
=  \log \lambda \delta^i_j +  \bfw \cdot ( \bfe_j \times \bfe^i ).\label{covariant_christoffel}
\eeq
Recalling the explicit representation of the Christoffel symbols from \eqref{chris_b}, we obtain after a long but straightforward calculation the following relations:
\beq
\frac{v_{,t}}{v}=\log\lambda,\quad\frac{u_{,t}}{u}=\log\lambda,\quad\frac{u_{,t}\cos\theta-v_{,x}}{u\sin\theta}=v\kappa_{g1}=\bfw\cdot\bfn,\label{S20}
\eeq
where the first equation in (\ref{S20})$_3$ can also be obtained by the compatibility conditions (\ref{GW-GC})$_{2,3}$.

We now show that  $\theta $ is independent of $t$ for the resulting parameterization of the surface.  Note that $ \cos \theta = \frac{\bfv \cdot \bfu}{|\bfv||\bfu|} = \frac{\bfA \bfy \cdot \lambda^t \bfR \bfy_{0,x}}{|\bfA \bfy ||\lambda^t \bfR \bfy_{0,x}|} = \frac{\bfA \bfy_0 \cdot \bfy_{0,x}}{|\bfA \bfy_0  ||\bfy_{0,x}|}=\frac{\bfv_0\cdot\bfu_0}{|\bfv_0||\bfu_0|}$.  Thus, we conclude that $\theta$ only depends on $x$ and is determined by the initial condition.

We then proceed to establish \eqref{growth_laws}.  Note that (\ref{S20})$_{1,2}$ implies that $\partial  (\log u - \log v)/\partial t  = 0$,  Consequently, $u=v g(x)$, where $g(x)$ depends only on $x$, and can be determined from the initial condition: $g(x)=u(0,x)/v(0,x)$.  By parameterizing the initial curve $\bfy_0(x)$ such that $|\bfy_{0,x}(x)|=v(0,x)$, $g(x)$ is simplified to $1$, and therefore
\beq \label{eq:uv}
v(t,x) =u(t,x).
\eeq
Combining this with the compatibility condition (\ref{GW-GC})$_3$, we conclude
\beq
v=\frac{\log\lambda\sin\theta}{\kappa_{g2}-\kappa_{g1}\cos\theta-\theta_{,x}/u}.\label{condition20}
\eeq
Using (\ref{eq:uv}) again, we obtain (\ref{growth_laws}), establishing the necessity.
\vspace{\baselineskip}

We now turn to proving that the local growth laws \eqref{growth_laws}, together with the initial conditions \eqref{initial_vu}, are {\it sufficient} for the global growth law \eqref{manifold}.  Note that \eqref{growth_laws} implies \eqref{condition20}.   Combining this with  the compatibility condition $u_{,t}\sin\theta=uv(\kappa_{g2}-\kappa_{g1}\cos\theta)-v\theta_{,x}$ (cf., (\ref{GW-GC})$_3$), we obtain $u_{,t}/u= v_{,t}/v = \log\lambda$.   Thus,
\beq
v(t,x)=\lambda^t v(0,x),\quad u(t,x)=\lambda^t u(0,x).\label{speeds}
\eeq
Since we assume that the angle between the growth direction and edge $\theta$ is independent of time, there exists a rotation $\bfR(t,x) \in \mathrm{SO}(3)$ with  the initial value $\bfR(0,x)=\bfI$ such that
\beq
\left\{ \frac{\bfv(t,x)}{v(t,x)} ,  \frac{\bfu(t,x)}{u(t,x)},\bfn(t,x) \right\} = \bfR(t,x)\  \left\{ \frac{\bfv_0}{v(0,x)},  \frac{\bfu_0}{u(0,x)},\bfn_0 \right\}.\label{frame1}
\eeq
Combining the initial condition (\ref{initial_vu}) with (\ref{speeds})-(\ref{frame1}), we have the following expressions,
\beqs
\bfv(t,x)&=&\lambda^t\bfR(t,x)(\log\lambda\bfI+\bfW)\bfy_0(x),\nonumber\\
\bfu(t,x)&=&\lambda^t\bfR(t,x)\bfy_0'(x).\label{velo_v}
\eeqs

We now show that $\bfR_{,t} = \bfR \bfW$ for all $t$ where $\bfW$ is given by the initial conditions \eqref{initial_vu}.  To that end, define skew $\bfW_1=\bfw_1\times$ and $\bfW_2=\bfw_2\times$ by $\bfW_1=\bfR^{\rm T}\bfR_{,t}$ and $\bfW_2=\bfR^{\rm T}\bfR_{,x}$.  We seek to show $\bfW_1 = \bfW, \bfW_2 = \bf0$.  For any $t$, we have the compatibility conditions \eqref{compatibility}, which imply 
\beq
\bfv_{,x}=\bfu_{,t},\quad\bfR_{,tx}=\bfR_{,xt}.\label{eq2}
\eeq
By substituting (\ref{velo_v}) into the first, expanding the second,  and simplifying, we get
\beqs
&\bfW_2\bfv_0=(\bfW_1-\bfW)\bfu_0,&\label{eq1}\\
&\bfW_1\bfW_2-\bfW_2\bfW_1=\bfW_{1,x}-\bfW_{2,t}.&\label{w1w2}
\eeqs
Then both sides of \eqref{eq1} are aligned with the $\bfn_0$ direction or equal to zero. It follows that the axis of $\bfW_2$ and $\bfW_1-\bfW$ lie in the plane spanned by $\bfv_0$ and $\bfu_0$, and hence 
\beqs
\bfw_2&=&a(t,x)\bfv_0+b(t,x)\bfu_0, \label{axis1}\\
\bfw_1-\bfw&=&-b(t,x)\bfv_0+c(t,x)\bfu_0.\label{axis2}
\eeqs
for some $a,b,c\in\R$. Now we look at the initial conditions of $\bfW_1$ and $\bfW_2$. Since $\bfR(0,x)=\bfI$, then $\bfR_{,x}(0,x)=\bfR(0,x)\bfW_2(0,x)=\bfW_2(0,x)=\bf0$. 
This implies that $a(0,x)=b(0,x)=0$  and the left-hand side of \eqref{w1w2} is zero at $t=0$. From the right-hand side of \eqref{w1w2}, we have $c_{,x}(0,x)\bfu_0+c(0,x)\bfu_{0,x}=a_{,t}(0,x)\bfv_0+b_{,t}(0,x)\bfu_0$. Because $\bfu_{0,x}\cdot\bfn(0,x)=\bfy_{0,xx}\cdot\bfn(0,x)\neq0$, we get $c(0,x)=0$, and thus $\bfW_1(0,x)=\bfW$. To summarize, at $t=0$,
\beq
\bfW_1(0,x)=\bfW,\quad\bfW_2(0,x)=\bf0.\label{initial_ws}
\eeq

It is straightforward to verify that $(\bfW_1,\bfW_2)=(\bfW,\bf0)$ satisfies (\ref{eq1}-\ref{w1w2}). We now proceed to establish the uniqueness of this solution. Assume $(\bfW_1,\bfW_2)$ is another pair of solution distinct from $(\bfW,\bf0)$. From (\ref{initial_ws}), their initial conditions are $(\bfw_1(0,x),\bfw_2(0,x))=(\bfw,\bf0)$. Write \eqref{eq1} and \eqref{w1w2} in vector form and let $\Delta\bfw_1=\bfw_1-\bfw$ to get
\beqs
&\bfw_2\times\bfv_0=\Delta\bfw_1\times\bfu_0,&\label{50}\\
&(\Delta\bfw_1+\bfw)\times\bfw_2=\Delta\bfw_{1,x}-\bfw_{2,t}.&\label{51}
\eeqs
Consider the difference between the two solutions and derive an energy estimate,
\beq
E(t)=\int_{L_1}^{L_2} |\Delta\bfw_1|^2+|\bfw_2|^2\ dx,\quad {\rm with} \; E(0)=0.\label{energy}
\eeq
Then
\beqs
E_{,t}(t)&=&2\int_{L_1}^{L_2}(\Delta\bfw_1\cdot\Delta\bfw_{1,t}+\bfw_2\cdot\bfw_{2,t})\ dx.\label{energy_derivative}
\eeqs
Since the surface is assumed to be $C^3$-smooth on the compact domain $[0,T]\times[L_1,L_2]$, $\bfw_1$, $\bfw_2$, and their derivatives are bounded. Taking derivative of (\ref{50}) with respect to $t$, we have
\beq
\bfw_{2,t}\times\bfv_0=\Delta\bfw_{1,t}\times\bfu_0.
\eeq
We first assume $\bfw_{2,t}\neq0$ at time $t\in(0,t_0)$ for some $t_0<T$. There exists a bounded symmetric tensor $\bfM_1$, e.g., \beq
\bfM_1=\frac{\Delta\bfw_{1,t}\cdot\bfw_{2,t}}{\|\bfw_{2,t}\|^2}\bfI+\frac{\bfp\otimes\bfw_{2,t}+\bfw_{2,t}\otimes\bfp}{\|\bfw_{2,t}\|^2}
\eeq
with $\bfp=\Delta\bfw_{1,t}-\frac{\Delta\bfw_{1,t}\cdot\bfw_{2,t}}{\|\bfw_{2,t}\|^2}\bfw_{2,t}$, such that 
\beq
\Delta\bfw_{1,t}=\bfM_1\bfw_{2,t}=\bfM_1(\Delta\bfw_{1,x}-(\Delta\bfw_1+\bfw)\times\bfw_2).
\eeq
Then the first term of (\ref{energy_derivative}) can be reformulated to
\beqs
2\int_{L_1}^{L_2}\Delta\bfw_1\cdot\Delta\bfw_{1,t}dx&=&2\int_{L_1}^{L_2}\Delta\bfw_1\cdot\bfM_1(\Delta\bfw_{1,x}-(\Delta\bfw_1+\bfw)\times\bfw_2)dx\nonumber\\
&=&2\int_{L_1}^{L_2}\Delta\bfw_1\cdot\bfM_1\Delta\bfw_{1,x}dx-2\int_{L_1}^{L_2} \Delta\bfw_1\cdot\bfM_1((\Delta\bfw_1+\bfw)\times\bfw_2)dx\nonumber\\
&=&\int_{L_1}^{L_2}(\Delta\bfw_1\cdot\bfM_1\Delta\bfw_{1})_{,x}dx-\int_{L_1}^{L_2}\Delta\bfw_1\cdot\bfM_{1,x}\Delta\bfw_{1}dx-\nonumber\\
&&2\int_{L_1}^{L_2} \Delta\bfw_1\cdot\bfM_1(\Delta\bfw_1\times\bfw_2)dx-2\int_{L_1}^{L_2} \Delta\bfw_1\cdot\bfM_1(\bfw\times\bfw_2)dx\nonumber\\
&\leq&\Delta\bfw_1\cdot\bfM_1\Delta\bfw_{1}\big|_{x=L_1}^{x=L_2}+\int_{L_1}^{L_2} \|\bfM_{1,x}\|\ |\Delta\bfw_1|^2dx\nonumber\\
&&+\int_{L_1}^{L_2}C_1\|\bfM_1\|( |\Delta\bfw_1|^2+|\bfw_2|^2)dx+\int_{L_1}^{L_2}\|\bfM_1\| |\bfw|( |\Delta\bfw_1|^2+|\bfw_2|^2)dx\nonumber\\
&=&\int_{L_1}^{L_2} \|\bfM_{1,x}\|\ |\Delta\bfw_1|^2dx+\int_{L_1}^{L_2}C_1\|\bfM_1\|( |\Delta\bfw_1|^2+|\bfw_2|^2)dx\nonumber\\
&&+\int_{L_1}^{L_2}\|\bfM_1\| |\bfw|( |\Delta\bfw_1|^2+|\bfw_2|^2)dx.\label{first_term}
\eeqs
Here $C_1$ denotes the upper bound of $|\Delta \bfw_1|$. In deriving this estimate, we have applied the Cauchy-Schwarz inequality ($\bfa\cdot \bfb \le |\bfa|\ |\bfb|$), Young’s inequality ($\bfa\cdot \bfb \le \tfrac{1}{2}(|\bfa|^2 + |\bfb|^2)$), and the operator norm bound ($|\bfM\bfv| \le \|\bfM\|_2\ |\bfv|$). The boundary term vanishes under the assumption that the curve is closed. For the second term of (\ref{energy_derivative}), we have
\beqs
2\int_{L_1}^{L_2}\bfw_2\cdot\bfw_{2,t}dx&=&2\int_{L_1}^{L_2}\bfw_2\cdot(\Delta\bfw_{1,x}-(\Delta\bfw_1+\bfw)\times\bfw_2)dx=2\int_{L_1}^{L_2}\bfw_2\cdot\Delta\bfw_{1,x}dx.
\eeqs
Taking derivative of (\ref{50}) with respect to $x$, we have 
\beq
\Delta\bfw_{1,x}\times\bfu_0=\bfw_{2,x}\times\bfv_0+\bfw_{2}\times\bfv_{0,x}-\Delta\bfw_{1}\times\bfu_{0,x}.
\eeq
Since $\bfw_{2,t} \neq 0$, we have $\bfw_2 \neq 0$ for $t \in (0,t_0)$. Therefore, at least one of $\bfw_{2,x}, \bfw_2$, or $\Delta \bfw_1$ must be nonzero. By a similar argument, there exist bounded symmetric tensors $\bfM_2$, $\bfM_3$, and $\bfM_4$ such that
\beq
\Delta\bfw_{1,x}=\bfM_2\bfw_{2,x}+\bfM_3\bfw_2+\bfM_4\Delta\bfw_1.
\eeq
Then the second term gives
\beqs
2\int_{L_1}^{L_2}\bfw_2\cdot\Delta\bfw_{1,x}dx&=&2\int_{L_1}^{L_2}\bfw_2\cdot(\bfM_2\bfw_{2,x}+\bfM_3\bfw_2+\bfM_4\Delta\bfw_1)dx\nonumber\\
&=&\bfw_2\cdot\bfM_2\bfw_{2}\big|_{x=L_1}^{x=L_2}-\int_{L_1}^{L_2}\bfw_2\cdot\bfM_{2,x}\bfw_{2}dx+2\int_{L_1}^{L_2}(\bfw_2\cdot\bfM_3\bfw_2+\bfw_2\cdot\bfM_4\Delta\bfw_1)dx\nonumber\\
&\leq&\int_{L_1}^{L_2}\|\bfM_{2,x}\|\ |\bfw_2|^2dx+2\|\bfM_3\|\ |\bfw_2|^2+\|\bfM_4\|(|\Delta\bfw_1|^2+|\bfw_2|^2)dx.\label{second_term}
\eeqs
Combining \eqref{first_term} and \eqref{second_term} and collecting terms, we obtain an inequality of Gronwall type with some constant $C$,
\beq
E_{,t}(t)\leq C\int_{L_1}^{L_2} |\Delta\bfw_1|^2+|\bfw_2|^2\ dx= CE(t),\quad {\rm with} \; E(0)=0.
\eeq
Solve the inequality to get
\beq
E(t)\leq E(0)\exp{(tC)}=0.
\eeq
Since the integrand in \eqref{energy} is nonnegative, we get
\beq
\Delta\bfw_1=\bfw_2=\bf0.
\eeq
If $\bfw_{2,t}=\bf0$ at $t\in (0,t_0)$, then the initial condition implies $\bfw_2=\bf0$. Moreover, from (\ref{axis1}-\ref{axis2}) and (\ref{50}-\ref{51}), it follows that $\Delta\bfw_1=\bf0$. In this case, the solution at $(0,t_0)$ coincides with the known solution. Applying the same argument on the subsequent interval $t>t_0$,  we conclude that $(\bfW,\bf0)$ is the unique solution. Hence, $\bfR$ is independent of $x$ and is determined by
\beq
\bfR_{,t}(t)=\bfR(t)\bfW,\quad \bfR(0)=\bfI.
\eeq
This corresponds to a rotation about the fixed axis $\bfw$ with angle $t$. Together with \eqref{velo_v}, we have 
\beq
\bfv(t,x)=(\log\lambda\bfI+\bfW)\lambda^t\bfR(t)\bfy_0(x),
\eeq
and $\bfy(t,x)$ is found by
\beq
\bfy(t,x)=\bfy_0(x)+\int_0^t\bfv(r,x) dr
=\lambda^t\bfR(t)\bfy_0(x),\quad x\in[L_1,L_2].
\eeq
\end{proof}

\begin{remark} \label{remark1}
The theorem above concerns shells whose growing edge is a closed curve.  If the growing edge is not closed, (\ref{growth_laws}) remains necessary and sufficient for generating a shell of the form (\ref{manifold}).  The necessity follows by the same argument, and sufficiency is established by verifying that $(\bfW,\bf0)$ is a solution of (\ref{eq1}, \ref{w1w2}).  However, we are unable to prove that (\ref{manifold}) is the unique solution.
\end{remark}


\section{Ornamentation patterns}
{ The Lie group $\calG$ used to generate the main shell surfaces shown in Figure \ref{spine}A is given by (\ref{group}), with $\lambda=1.10$, $1.08$, and $1.24$ for the three examples, respectively. The positions of the ornamentation patterns are determined by discrete subgroups of the corresponding $\calG$, namely
\beqs
\calG_1&=&\{\lambda^t\bfR(t), ; t\in0.75\mathbb{Z}\},\label{g1}\\
\calG_2&=&\{\lambda^t\bfR(t), ; t\in2.04\mathbb{Z}\},\\
\calG_3&=&\{\lambda^t\bfR(t), ; t\in0.34\mathbb{Z}\}
\eeqs
where $\mathbb Z$ denotes the integers.  To illustrate the construction, we consider the axial lamellae in Epitoniidae in Figure \ref{spine}A as an example. The discrete subgroup basically determines the spatial arrangement of the objects to which it is applied. For instance, applying the discrete group (\ref{g1}) to a point $\bfy_0(x_0)$ on the initial seed curve $\bfy$ will generate a collection of points in purple shown in Figure \ref{frond1}(A). To construct the axial lamella, however, we apply $\calG_1$ not to a single point but to a surface generated by $\calG_f(\bfy_0)$, as illustrated in the left panel of Figure \ref{frond1}(B). In this example, $\calG_f$ is given by
\beq
\calG_f=\{\lambda_f^t\bfR(t),;\lambda_f=2.69,;t\in[0,0.15]\}.
\eeq
The resulting ornamentation pattern is then obtained as
\beq
\calS_f=\calG_1(\calG_f(\bfy_0(x))),\quad x\in[L_1,L_2],
\eeq
which is shown in the right panel of Fig.~\ref{frond1}(B).
\begin{figure}[h!]
\centering
\includegraphics[width=0.8\textwidth]{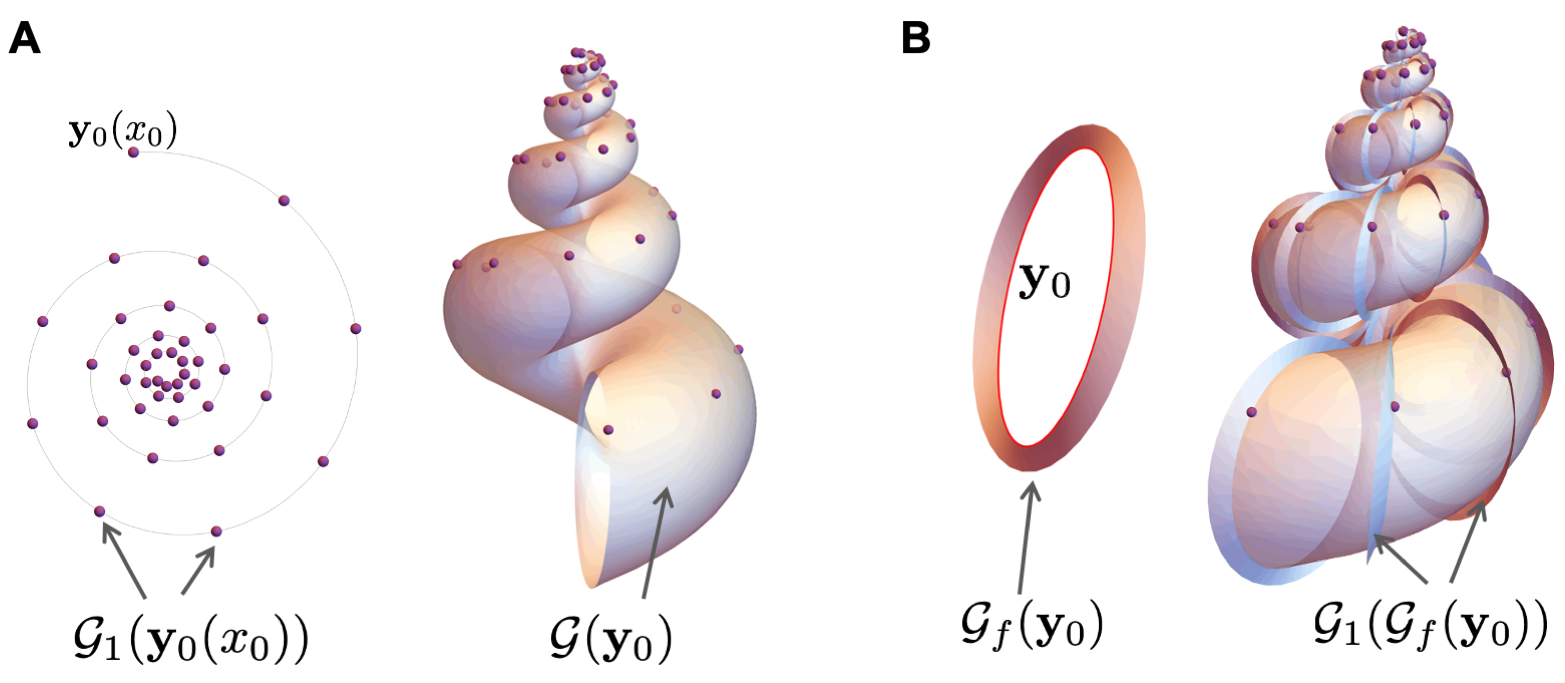}
\caption{An illustration of how a discrete subgroup generates ornamentation patterns. (A) Top view of the orbit generated by applying the subgroup $\calG_1$ to a point $\bfy_0(x_0)$, together with the corresponding discrete locations on the shell surface $\calG(\bfy_0)$. (B) A single axial lamella in Epitoniidae generated by a distinct Lie group $\calG_f$ acting on $\bfy_0$, and the complete ornamentation pattern obtained by applying the subgroup $\calG_1$ to $\calG_f(\bfy_0)$.}
\label{frond1}
\end{figure} 
}

\end{document}